\newtheorem{theorem}{Theorem}
\newtheorem{lemma}[theorem]{Lemma}
\newtheorem {observation}[theorem]{Observation}
\newtheorem{claim}[theorem]{Claim}
\newtheorem{corollary}[theorem]{Corollary}
\newcommand{\ceiling}[1]{\left\lceil #1 \right\rceil}
\newcommand{\floor}[1]{\left\lfloor #1 \right\rfloor}
\begin{document}

\begin{frontmatter}

\title{  Multitasking Scheduling with Shared Processing
}

\author[inst1]{Bin Fu}
\ead{bin.fu@utrgv.edu}
\affiliation[inst1]{organization = {Department of Computer Science},
addressline = {University of Texas Rio Grande Valley}, city = { Edinburg}, state = {TX}, postcode={78539}, country={USA}}

\author[inst2]{Yumei Huo\corref{cor1}}
\cortext[cor1]{Corresponding author}
\ead{yumei.huo@csi.cuny.edu}
\affiliation[inst2]{organization = {Department of Computer Science},
addressline = {College of Staten Island, CUNY}, city = { Staten Island}, state = {NY}, postcode={10314}, country={USA}}

\author[inst2]{Hairong Zhao}
\ead{hairong@purdue.edu}
\affiliation[inst2]{organization = {Department of Computer Science},
addressline = {Purdue University Northwest}, city = { Hammond}, state = {IN}, postcode={46323}, country={USA}}

\begin{abstract}

Recently, the problem of multitasking scheduling has attracted a lot of attention in the service industries where workers
frequently perform multiple tasks by switching from one task to another. %Although some research has been done on the productivity of multitasking in the literature, the study of multitask scheduling is still very limited.
Hall, Leung and Li (Discrete Applied Mathematics 2016) proposed a shared processing multitasking scheduling model which allows a team to continue to work on the  primary tasks while processing  the routinely scheduled activities as they occur. The processing sharing is achieved by allocating a fraction of the processing capacity to routine jobs and the remaining fraction, which we denote as sharing ratio, to the primary jobs. % , which we call as processor sharing ratio,   will be
%  The authors studied several scheduling problems on a single machine assuming the sharing ratio is a constant.
In this paper, we generalize this model to parallel machines and allow the fraction of the processing capacity assigned to routine jobs to vary from one to another. The objectives are minimizing makespan and minimizing the total completion time.
We show that for both objectives, there is no polynomial time approximation algorithm unless $P=NP$ if the sharing ratios
%, which are the fractions of the processing capacity assigned to primary jobs,
are arbitrary for all machines. Then we consider the problems where the sharing ratios on some machines have a constant lower bound. For each objective, we analyze the performance of the classical scheduling algorithms and their variations and then develop a polynomial time approximation scheme
%that runs in linear time
when the number of machines is a constant.
%For the objective of makespan, we analyze the performance of the popular  scheduling rules and their variations including List Scheduling (LS), Longest Processing Time first (LPT),    LS-ECT and LPT-ECT and then develop one approximation scheme that runs in linear time when the number of machines is a constant. For the objective of total completion time,
%we analyze the performance of SPT and SPT-ECT rule and develop an approximation scheme.

\end{abstract}
\begin{keyword}
 scheduling, shared processing, parallel machines, makespan, total completion time
\end{keyword}

\end{frontmatter}

\section{Introduction}\label{intro}
Recently, the problem of multitasking scheduling has attracted a lot of attention in   the service industries where workers frequently perform multiple tasks by switching from one task to another. Initially the term multitasking (or time sharing) was used to describe the sharing of computing processor capacity among a number of distinct jobs (\cite{d71}).
%In the workplace
Today, human often work in the mode of multitasking in many areas such as health care, where 21\% of hospital employees spend their working time on more than one activity \cite{olb06}, and  in consulting where workers usually engage in about 12 working spheres per day \cite{gm05}.
% As for the effect of multitasking, some research  \cite{vmn08} showed that by working on several different tasks over a given interval of time, a multitasker can improve overall productivity by reducing the time spent waiting between jobs, and it is more efficient for the worker to switch to a new task rather than idly waiting on a pending task. Some other  studies (\cite{h05}, \cite{mm03}), however,  have  indicated that multitasking may disrupt work and may result in a significant loss of productivity. Thus, it is very important to investigate the effects of multitasking on the productivity from different perspectives.
Although in the literature some research has been done on the effect of multitasking (\cite{gm05}, \cite{cip14}, \cite{ssmw13}), the study on multitasking in the area of scheduling is still very limited (\cite{hll15}, \cite{hll16}, \cite{sh15}, \cite{zzc17}).

Hall, Leung and Li \cite{hll16} proposed a multitasking scheduling model that allows a team
 to continuously   work on its main, or primary tasks while  a fixed percentage of its processing capacity may be allocated to process the routinely scheduled activities as they occur.
Some examples of the routinely scheduled activities are administrative meetings, maintenance work, or meal breaks. In these scenarios, some team members will perform these routine activities while the remaining team members can still focus on the primary tasks. An application of this model is in the call center where during a two-hour lunch period each day one half of the working team will take a one hour lunch break during each hour so that no customers' calls will be missed. In these examples, a working team can be viewed as a machine which may have some periods during which routine jobs and primary jobs will share the processing.

In many practical situations of this multitasking scheduling model, since the routine activities are essential to the maintenance of the overall system, they are usually managed separately and independent of the primary jobs. Some third-party companies such as Siteware, provide  routine activities management for other companies. They help plan the routine activities for all the teamwork including the release times and duration of the routine jobs, the priority of the routine jobs, and the team members to whom the routine jobs can be assigned, etc. As described in the website of Siteware, %it is of vital importance that all routine processes are standardized for teamwork, categorized by their request of resources and level of collaboration, prioritized based on their deadlines, and performed correctly within scheduled times.
%The company addresses that there is a necessity to delegate
usually routine jobs are assigned to the employees of the teams based on two criteria: professional skills and procedure priority. Professional skills criteria refers to handing over routine duties according to employees’ skills. Procedure priority refers to delegate routine activities that do not demand special attention and can be done by other people without major issues. The service provided by companies like Siteware motivates our model such that when the primary jobs are all available at time 0 for scheduling, the release times and duration of the routine jobs are all predetermined and are known beforehand and the machine capacity for each routine job can be predetermined as well.
%processing primary jobs can be a constant for each shared processing interval where
%As described in the website of siteware.co, work routine management, which is indispensable for companies that intend to improve their performance and achieve better results, usually predetermines and plans the routine jobs for all the teamwork including
%predetermined and planned independent of the primary jobs. For example, the release times and duration of the routine jobs  such as administrative meetings, maintenance work, lunch breaks etc.,  are known beforehand. And in many cases, the routine jobs only need to be or should be  assigned to some team members. Routine jobs management
%Take the company siteware.co as an example. Siteware.co provides the service to other companies by finding them the solutions of work routine management. As described in the website of siteware.co, work routine management is indispensable for companies that intend to improve their performance and achieve better results.
%To exemplify this scenario, one can refer to the systems where the on-time performance of routine jobs is critical.
In \cite{hdv14}, routine jobs are also discussed in that the relative priority given to these routine activities is determined by established rules.

In these practical situations, due to the predetermined routine jobs, a working team is viewed as a machine which may have a sequence of time intervals with different processing capacities that are available for primary jobs.
%the fact that only some team members are delegated to process routine jobs and sometimes the urgency of the primary jobs, the positive but maybe low machine capacity %(the processing ratio e is greater than 0 but a relatively low value) is still given to the primary jobs.
%In these situations, the  machine capacity for
%processing primary jobs can be a constant for each shared processing interval where
%each routine job can be predetermined.
In the scheduling model of \cite{hll16}, it is assumed that the machine capacity is the same for all routine jobs and there is only a single machine.
In this paper, we generalize the model from \cite{hll16} to  parallel machine environment. Moreover, the machine capacity allocated to routine jobs can vary from one to another, instead of being same for all routine jobs as in  \cite{hll16}. This is more practical considering there may exist different routine jobs with different priorities. Hence, the goal is to schedule the primary jobs to the machines subject to the varying capacity constraints so as to minimize the objectives.

In this paper we also consider the case that on some machines the machine capacity allocated to routine jobs is limited and thus the machine capacity allocated to primary jobs may have a constant lower bound. This scenario can be exemplified by the following real life applications. In
%many circumstances, it is necessary to have service continuously available for primary jobs, such as in
many companies' customer service and technical support departments, the service must be continuous for answering customers' calls and for troubleshooting the customers' product failures.
So a minimum number of members from the team are needed to provide these service at any time while the size of a team is typically of ten or fewer members as recommended by Dotdash Meredith Company in their management research. To model this, we allow  the capacities allocated for primary jobs on some machines to have a constant lower bound.

\subsection{Problem Definition}

Formally, our problem can be defined as follows. We are given $m$ identical machines $\{M_1, M_2, \ldots, M_m\}$ and a set $N = \{1, \ldots, n\}$ of primary jobs that are all available for processing at time 0. Each primary job $j \in N $ has a processing time $p_j$ and can be processed by any one of the machines uninterruptedly.
%A primary job can only be processed on one machine, and cannot be preempted.
We are also given a set of routine jobs that have to be processed during certain time intervals on certain machines.
When a routine job is processed, a fraction of the machine capacity is given to the routine job and the remaining machine capacity is still used for the continuation of the primary job. If a primary job shares the processing with the routine job but completes before the routine job,
then the next primary job will be immediately started and share the processing with this routine job. On the other hand, if the routine job shares the processing with a primary job but completes before the primary job and no other routine job is waiting for processing, this primary job will immediately have full capacity of the machine for processing. We use $k_i$ to denote the number of routine jobs that need to be processed on machine $M_i$, and  $\tilde {n}$ to denote the total number of the routine jobs, i.e. $\tilde{n}  = \sum_{1 \le i \le m} k_i$.

\begin{figure}
 \includegraphics[width=\textwidth]{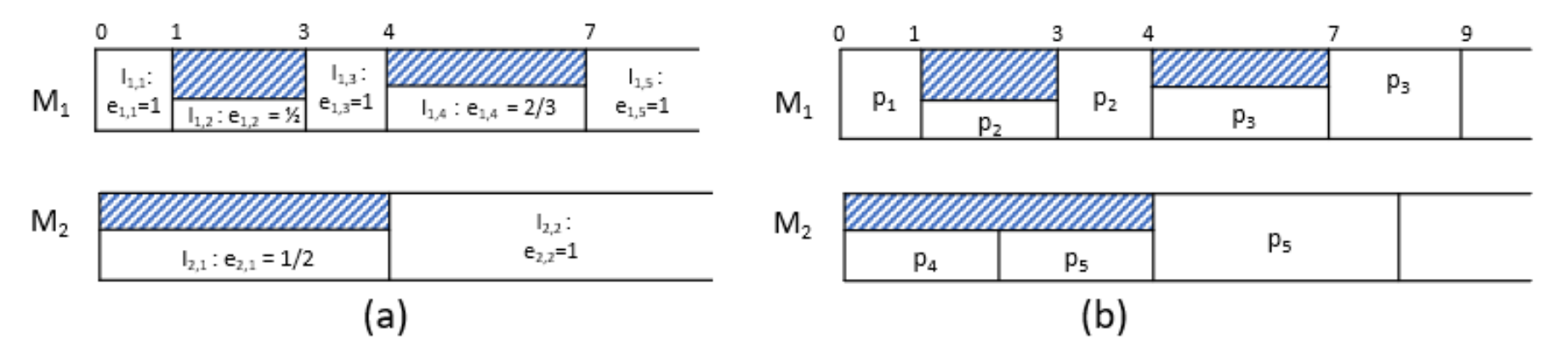}
\caption{An example of multitasking scheduling on 2 machines with shared processing, 3 routine jobs are shown as shaded intervals. (a) The intervals and the sharing ratios; (b) A schedule of the 5 primary jobs with the processing times of 1, 2, 4, 1, 5. } \label{fig:sharing intervals}
\end{figure}
%As we addressed earlier, routine job management is independent from the primary job. It has been predetermined during which time interval a routine job will be processed on which machine with how much machine capacity.
As we addressed earlier, routine job management, independent from the primary job scheduling, has predetermined the time interval during which a routine job is processed, the machine on which the routine job is processed, and the machine capacity that is assigned to a routine job.
%sharing ratio during the interval, and  are all fixed and known beforehand.
So we are only concerned with the schedule of the primary jobs when some fraction of the machine capacity has been assigned to the routine jobs during some time intervals on some machines.
%We are  interested in the intervals where we can schedule the primary jobs and how much capacity is allocated to primary jobs.
In this sense, we can view each machine consisting of intervals with full capacity alternating with those intervals with a fraction of machine capacity, and we use  ``sharing ratio'' to refer to the fraction of the capacity available to the primary jobs, which is in the range of $(0,1]$. Apparently, each machine $M_i$ $(1 \le i \le m)$ has  $O(k_i)$ intervals in total.
%sharing ratio less than 1.
Without loss of generality,
%we assume that the $n$ primary jobs are input in arbitrary order,
we assume that these intervals are given in  sorted order, denoted as  $I_{i,1} = (0, t_{i,1}]$, $I_{i,2} = (t_{i,1}, t_{i,2}]$, $\ldots$, and their corresponding sharing ratios are $e_{i,1}$, $e_{i,2}$, $\ldots$, all of which  are  in the range of $(0,1]$, see Figure~\ref{fig:sharing intervals} (a) for an illustration of machine intervals and Figure~\ref{fig:sharing intervals} (b) for an illustration of one schedule of primary jobs in these intervals.

%For any time point $t$, we let $A_i(t)$ denote the total amount of processing time of the jobs that can be processed during $(0,t]$ on machine $M_i$. Formally, $A_i(t)={t}\cdot{e_{i,1}}$ if $t\in I_{i,1} = (0, t_{i,1}]$, and $A_i(t)=A_i(t_{i, k})+{(t-t_{i,k})}\cdot{e_{i, k+1}}$ if $t\in I_{i, k+1}=(t_{i,k}, t_{i, k+1}]$.
%By the definition of $A_i(t)$, if $S_i$ is a set of jobs assigned to machine $M_i$ and $\sum_{j\in S_i} p_i\le A_i(t)$, then the jobs in $S_i$ can be completed before time $t$ on machine $M_i$.
%Let $A(t)$ be the total amount of processing time of the jobs that can be processed during $(0,t]$ on all machines. That is, $A(t)=\sum_{i=1}^m A_i(t)$.
%that is, if $t_{i,k} < t \le t_{i,k+1}$, $A_i(t) = ( \sum_{u \le k} (t_{i,u}-t_{i,u-1})e_{i,u}) + (t-t_{i,k}) e_{i,k+1}$.

% /////////////////////////////

% %To exemplify this scenario, one can refer to the systems where the on-time performance of routine jobs is critical. Hence,
% %
% %Therefore, we can define our machine environment such that for each machine $i (1\le i\le m)$, a sequence of time intervals $I_{i,k}$ with the processing capacity ratio of $e_{i,k}$ are available for primary jobs' processing.
% ////////////////////////////////

In this paper, we focus on two objectives: minimizing the makespan and minimizing the total completion time of the primary jobs. For any schedule $S$, let $C_j(S)$ be the completion time of the primary job $j$ in $S$. If the context is clear, we use $C_j$ for short. The makespan of the schedule $S$ is $C_{max}(S)=\max_{1\le j \le n} \{C_j\}$ and the total completion time of the schedule $S$ is $\sum C_j(S)=\sum_{1\le j \le n} \{C_j\}$. Using the three-field $\alpha \mid \beta \mid \gamma$ notation introduced by Graham et al. \cite{gllr79}, our problems are denoted as $P_m, e_{i,k} \mid \mid C_{max}$ and $P_m, e_{i,k} \mid \mid \sum C_j$ if the sharing ratios are arbitrary; % for all shared intervals (routine jobs);
if all the sharing ratios have a constant lower bound $e_0$ ($0 < e_0 \le 1$), our problems are denoted as $P_m, e_{i,k} \ge e_0 \mid \mid C_{max}$ and $P_m, e_{i,k} \ge e_0  \mid  \mid \sum C_j$; and if the sharing ratio is at least $e_0$ for the intervals on the first $m_1$ ($ 1 \le m_1 \le m-1$) machines but arbitrary for other $(m-m_1)$ machines, our problems are denoted as $P_m, e_{i \le m_1,k} \ge e_0 \mid \mid C_{max}$ and $P_m, e_{i \le m_1,k} \ge e_0  \mid  \mid \sum C_j$.

\subsection{Literature Review}
The shared processing multitasking model studied in this paper was first proposed by Hall et. al. in \cite{hll16}. They studied this model in the single machine environment and assumed that the sharing   ratio   is a constant $e$ for all the shared intervals. For this model, it is easy to see that the makespan is the same for all schedules that have no unnecessary idle time. The authors in  \cite{hll16}  showed that the total completion time can be minimized by scheduling the jobs in non-decreasing order of the processing time, but  it  is unary NP-Hard for the objective function of total weighted completion time. When the primary jobs have different due dates, the authors    gave  polynomial time algorithms for  maximum lateness and the number of late jobs.
%Then, Sum et al. \cite{slh}, ??? Sum et al. \cite{sh15}, ??? and Zhu et al. \cite{zzc17} ??? extended their work and study the problem such that the processing of a selected task suffers from interruption by other tasks that are available but unfinished.

For the related work, Baker and Nuttle \cite{bn80} studied the problems of sequencing $n$ jobs for processing by a single resource to minimize a function of job completion times subject to the  constraint that the availability of the resource varies over time. The motivation for this machine environment comes from  the situation in which processing requirements are stated in terms of labor-hours and labor availability varies over time. The example can be found in the applications of rotating Saturday shifts, where the company only maintains a fraction, for example 33\%, of the workforce every Saturday. The authors showed that a number of well-known results for classical single-machine problems can be applied with little or no modification to the corresponding variable-resource problems. Hirayama and Kijima \cite{hk92} studied this problem when the machine capacity varies stochastically over time. Adiri and Yehudai \cite{ay87} studied the problem on single and parallel machines such that if a job is being processed, the service rate of a machine remains constant and the service rate can be changed only when the job is completed.

So far there are no results about the problems studied in this paper. Note that  if $e_{i,k} = 1$ for all time intervals, that is, there are no routine jobs, our problems become the classical parallel machine scheduling problems $P_m \mid \mid C_{max}$ and $P_m \mid \mid \sum C_j$. The problem $P_m \mid \mid \sum C_j$ can be solved optimally using SPT (Shortest Processing Time First) rule, which schedules the next shortest job to the earliest available machine. The problem $P_m \mid \mid C_{max}$ is a NP-Hard problem and some approximation algorithms have been designed for it. Graham showed in \cite{g66} that LS (List Scheduling) rule generates a schedule with an  approximation ratio of $2-\tfrac{1}{m}$. The LS rule schedules the jobs one by one in the given ordered list. Each job is assigned in turn to a machine which is available at the earliest time. If the given list is in non-increasing order of the processing time of the jobs, the list schedule rule is called LPT ( the Longest Processing Time First) rule. In \cite{ay87},  Graham showed that LPT  rule generates a schedule whose approximation ratio is $(\tfrac{4}{3} - \tfrac{1}{3m})$.
%assigns at time t = 0 the m longest jobs to the m machines. After that, whenever a machine is freed the longest job among those not yet processed is put on the machine.
Hochbaum and Shmoys \cite{hds87} designed a PTAS for this problem. When the number of machines $m$ is fixed, Horowitz and Sahni \cite{hs76} developed an FPTAS.

On the other hand, if $e_{i,k} \in \{0, 1 \}$  for all time intervals, i.e. at any time the machine is either processing a primary job or a routine job but not both,  then our problems reduce to the problems of parallel machine scheduling with availability constraint where jobs can be resumed on after being interrupted: $P_m| r-a |C_{max}$ and $P_m| r-a |\sum C_j$.  The problem $P_m| r-a |C_{max}$ is NP-hard and
% for both makespan and total completion time.
approximation  algorithms are developed
%for total completion time \cite{ll93} and makespan
in  \cite{k98} and \cite{l91}.
The problem $P_m| r-a |\sum C_j$ is NP-hard when $m=2$ and one machine becomes unavailable after some finite time \cite{ll93}. In the same paper Lee and Liman show that the SPT rule with some modifications leads to a tight relative error of $\tfrac{1}{2}$ for $P_2, NC_{zz} \mid \mid \sum C_j$ where one machine is continuously available and the other machine becomes unavailable after some finite time.

%there is only one if the jobs are not allowed to migrate from one machine to another; and when the jobs are resumable, Leung and Pinedo showed that preemptive SPT rule is optimal when the number of available machines does not go down by 2 during the time period of $p_{max}$, which is the largest processing time of all jobs. (??? missing citation. migration is confusing.)

%In our model, there are multiple machines, the processing capacity of each machine can change between 0 and 1 from interval to interval and the capacity change can happen while a job is being processed. The goal is to find a schedule to minimize total completion time. So far there are no results about the problems studied in this paper. Note that  if the processing capacity is always 1, that is, all machines have full availability, our problem becomes the classical parallel machine scheduling problem $P_m \mid \mid \sum C_j$ which can be solved using SPT (Shortest Processing Time First) rule.
%On the other hand, if the processing capacity is either 0 or 1 at any time, i.e.the machine either has full availability or is totally not available for processing the jobs, then our problem reduces to the problem of parallel machine scheduling with unavailability constraint where jobs can be resumed on after being interrupted. %When the jobs are non-resumable,

\subsection{New Contribution}

In this paper, we generalize the shared processing multitasking model proposed by Hall et. al. in \cite{hll16} to parallel machine environment and allow the processing capacity to be different for different routine jobs.

For the objective of makespan, we show that there is no approximation algorithm for the problem $P_m, e_{i,k} \mid \mid C_{max}$ (that is, the sharing ratios are arbitrary for all shared intervals)
unless $P=NP$. Then for a fixed $e_0$, $0 < e_0 < 1$, we analyze the performance of the LS rule,  LS-ECT  (List Scheduling - Earliest Completion Time) rule, LPT  and LPT-ECT (Largest Processing Time - Earliest Completion Time) rule for the problems $P_m,  e_{i,k} \ge e_0 \mid \mid C_{max}$ and $P_m, e_{i\le m_1,k} \ge e_0 \mid \mid C_{max}$. We then develop an approximation scheme for the problem $P_m, e_{i\le m_1,k} \ge e_0 \mid \mid C_{max}$ whose running times is linear time when the number of machines is a constant.

%we study the total completion time minimization problem on parallel machine where the availability of machines varies over time. Specifically, each machine's time line is composed of multiple continuous intervals with the different constant processing capacitys in each interval.
%
For the objective of total completion time, we show that there is no approximation algorithm for the general problem $P_m, e_{i,k} \mid  \mid \sum C_j$ (that is, the sharing ratios are arbitrary for all intervals)
unless $P=NP$. Then for $P_m, e_{i\le m_1,k} \ge e_0 \mid \mid \sum C_j$, that is, the sharing ratios are at least $e_0$ for the intervals on the first $m_1$ ($ 1 \le m_1 \le m-1$) machines but arbitrary for other $(m-m_1)$ machines, we analyze the performance of the SPT rule and SPT-ECT  (Shortest Processing Time  - Earliest Completion Time) rule, and show that SPT-ECT is a $\ceiling{\tfrac{m}{m_1}}\tfrac{1}{ e_0}$-approximation algorithm while SPT can perform arbitrarily bad. We then develop an approximation scheme for the problem $P_m, e_{i\le m-1,k} \ge e_0 \mid \mid \sum C_j$.

\medskip
\medskip

The paper is organized as follows. In Section 2, we study the problems with the objective of makespan minimization. In Section 3, we study the problems with the objective of the total completion time minimization. At last in Section 4 we draw the concluding remarks.

\section{Makespan Minimization}

\subsection{Hardness for Approximation}

In this section, we show that if the processor sharing intervals and the sharing ratios in these intervals  are arbitrary, then  no approximation algorithm exists.

\begin{theorem}\label{theorem-inapprox}
 Let $\mathbb{N}$ be the set of natural numbers. Let $f(n): \mathbb{N}\rightarrow \mathbb{N}$ be an arbitrary function such that $f(n)>1$. There is no  polynomial time $f(n)$-approximation algorithm for $P_m, e_{i,k} \mid  \mid C_{\max}$  even if $m=2$ unless $P=NP$.
\end{theorem}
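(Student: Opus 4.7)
The plan is to give a gap-producing reduction from the NP-hard Partition problem, exploiting the fact that sharing ratios may be made arbitrarily small. Given a Partition instance $\{a_1, \ldots, a_n\}$ with $\sum_{j=1}^n a_j = 2B$, I will construct a two-machine instance of $P_2, e_{i,k} \mid\mid C_{\max}$ whose optimal makespan equals $B$ when the Partition instance is a YES instance and is strictly larger than $f(n) \cdot B$ when it is a NO instance. A hypothetical polynomial time $f(n)$-approximation algorithm for the scheduling problem could then decide Partition by comparing its output against the threshold $f(n) \cdot B$.

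The construction I would use: create $n$ primary jobs with processing times $p_j = a_j$, and on each of the two machines $M_1$ and $M_2$ set the sharing ratio to $1$ on the interval $(0, B]$ and to a tiny value $\epsilon$ (chosen below) on $(B, \infty)$. In the YES case, scheduling the two witnessing subsets one per machine gives exactly $B$ units of load per machine, all fitting inside the full-capacity region, so $C_{\max}^{*} = B$. In the NO case, integrality of the $a_j$ forces some machine to receive total load at least $B + 1$; since only $B$ units of work can be completed in the full-capacity portion of that machine, at least one unit of work must be executed past time $B$, where the effective processing rate is $\epsilon$. The busy time past $B$ is therefore at least $1/\epsilon$, yielding $C_{\max}^{*} \geq B + 1/\epsilon$.

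Choosing $\epsilon = 1/(f(n) \cdot B)$, the gap between the YES and NO optima becomes $(B + 1/\epsilon)/B = 1 + f(n) > f(n)$, so any $f(n)$-approximation algorithm must return different values on the two cases, settling Partition in polynomial time. The main point to verify is that the reduction runs in polynomial time, which reduces to observing that $\epsilon$ admits a polynomial size binary representation; this holds whenever $f(n)$ does, and is the natural regime for the theorem. A minor subtlety worth confirming is that allowing a primary job to straddle $t = B$ (its prefix processed at rate $1$, its tail at rate $\epsilon$) does not weaken the argument: the lower bound counts total work processed after $B$ and not the number of jobs involved, so at least one unit of work still lies in the shared region and contributes at least $1/\epsilon$ to the busy period past $B$.
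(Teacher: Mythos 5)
Your proposal is correct and follows essentially the same route as the paper: a gap reduction from Partition in which both machines run at full capacity up to time $B = A/2$ and at a tiny sharing ratio of order $1/(f(n)B)$ afterwards, so that a YES instance yields makespan exactly $B$ while a NO instance forces makespan beyond $f(n)\cdot B$. The only cosmetic difference is that the paper uses a finite shared interval $[\tfrac{A}{2}, \tfrac{A}{2}+f(n)A]$ rather than $(B,\infty)$, and your added remarks on the encoding size of $\epsilon$ and on jobs straddling $t=B$ are sound clarifications of points the paper leaves implicit.
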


\begin{proof}
We prove the inapproximability by reducing from the partition problem.

\noindent {Partition Problem: } Given a set of positive integers $\{a_1,\cdots, a_n\}$ where  $A = \sum_{i=1}^n a_i$ is even, find if the set can be  partitioned into two sets with equal sum $\frac{1}{2}A$.

Given an instance of partition problem, we can reduce it  to an instance of our   scheduling problem  as follows:
There are two machines, $n$ primary jobs, for each job $j$, $p_j = a_j$. There are two routine jobs, one for each machine, both of which are during the interval $[\tfrac{A
}{2}, \tfrac{A}{2} + f(n)\cdot A]$ with the sharing ratio of $e= \tfrac{1}{ f(n) \cdot A}$.

We will show that if there is an $f(n)$-approximation algorithm for the scheduling problem,  then   the algorithm  returns a schedule  with makespan at most $\tfrac{A}{2} \cdot f(n)$ for the constructed  scheduling problem instance if and only if there is a partition to the given partition instance.

Apparently, the makespan of any schedule of the primary jobs is at least $\tfrac{A}{2}$. Also if a job can't finish at $\tfrac{A}{2}$, it will then takes at least $f(n)A$ additional time  due to the processing sharing.    Thus the makespan of any schedule is either exactly $\tfrac{A}{2}$, or at least $\tfrac{A}{2}+f(n)A = \tfrac{A}{2}(1 +
2f(n))$. Therefore, if the approximation algorithm returns a schedule whose makespan is at most $\tfrac{A}{2} \cdot f(n)$, then the makespan of the schedule must be exactly $\tfrac{A}{2}$. This implies that there is a partition to the integers.

On the other hand, if there is a partition, i.e., the numbers can be partitioned into two sets, each    having a sum exactly $\tfrac{A}{2}$, then the optimal schedule  will schedule the corresponding jobs in each set to a machine and its makespan  is exactly $\tfrac{A}{2}$. Then the $f(n)$-approximation algorithm
%If an algorithm is an $f(n)$-approximation algorithm, it
must return a schedule whose makespan is at most  $\tfrac{A}{2} \cdot f(n)$.

The above analysis shows that the integer partition problem has a solution if and only if the algorithm  returns a schedule with completion time at most $\tfrac{A}{2}f(n)$ for the corresponding scheduling instance.  Since the partition problem is NP-hard, unless P=NP, there is no such approximation algorithm.
\end{proof}

Given the inapproximability result from Theorem~\ref{theorem-inapprox}, from now on, we will focus on the problems  $P_m, e_{i,k} \ge e_0 \mid \mid C_{max}$ and $P_m, e_{i\le m-1,k} \ge e_0 \mid \mid C_{max}$.

%consider the cases where there are   some restrictions on the sharing ratio of  some machines. Let $e_0$ be a constant such that $e_0\in (0,1]$, and $m_1$ be an integer such that $1\le m_1 \le m$, we consider the case that $e_{i,j} \ge e_0$ for $1 \le  i \le m_1$. In other words, the sharing ratio is at least $e_0$ for the intervals on the first $m_1$ machines, we denote this problem by  $Pm | \text{share, } e_{i,k} \ge e_0 \text{ for } i \le m_1 |  C_{\max}$. One special case is $e_0 =1 $, that is, no processor sharing for   $m_1$ machines, in this case, the notation can be simplified as $Pm | \text{share, } e_{i,k}=1 \text{ for } i \le m_1 |  C_{\max}$. Another special case is $m_1 = m$, the notation can be simplified as  $Pm | \text{share, } e_{i,k}\ge e_0 |  C_{\max}$.

% First, if we only have $k < m$ machines that have sharing intervals, we denote the problem to be $Pm| {share}_k|  C_{\max}$.
% Second, let $e_0$ be a constant such that $e_0\in (0,1)$. If for all sharing  intervals on all machines we have,  $e_{i,k} \ge e_0$, we denote the problem to be $Pm | \text{share,} e_{i,k}\ge e_0 |  C_{\max}$; if    $e_{i,k} \ge e_0$ holds for  $m_1$ machine only, we denote the problem to be $Pm | \text{share,}e_ {i,k} \ge e_0 \text{i \le m_1} |  C_{\max}$.

\subsection{Approximation Algorithms}
\subsubsection{Preliminary}
First, we consider the problem  $P_m,  e_{i,k} \ge e_0 \mid \mid C_{max}$ where the sharing ratio is bounded below by a constant $e_0$ for all intervals, i.e. $e_{i,k} \ge e_0$ for all machines  $M_i$, $1 \le i \le m$.

We have the following observation: Let $I$ be an instance for $ P_m,  e_{i,k} \ge e_0 \mid \mid C_{\max}$, and $I'$ be the corresponding instance for $ P_m \mid\mid C_{\max}$. Let $S'$ be a schedule for $I'$ and $S$ be the corresponding schedule for $I$. Then the completion time of a job $j$ in $S$ is at most $\tfrac{1}{e_0}$ times that in $S'$. It is easy to see that the optimal makespan for $I$ must be greater than or equal to that of  $I'$, thus we can get the following observation.

 \begin{observation} \label{1-over-c-approx}
 An  $\alpha$-approximation algorithm for $ P_m \mid\mid  C_{\max}$ is an $\tfrac{\alpha}{e_0}$-approximation algorithm for $P_m,  e_{i,k} \ge e_0 \mid \mid C_{\max}$.
\end{observation}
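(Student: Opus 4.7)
The plan is to chain the two inequalities that are essentially isolated in the paragraph preceding the observation. Let $\mathcal{A}$ be an $\alpha$-approximation algorithm for $P_m\mid\mid C_{\max}$. Given an instance $I$ of $P_m,\, e_{i,k} \ge e_0 \mid\mid C_{\max}$, I would form the companion classical instance $I'$ by discarding all routine jobs (equivalently, setting every sharing ratio to $1$), run $\mathcal{A}$ on $I'$ to obtain a schedule $S'$, and then lift $S'$ to a schedule $S$ for $I$ that preserves the job-to-machine assignment and the relative order of jobs on each machine. The algorithm's output on $I$ is $S$.

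The analysis has two steps. First, $C_{\max}(S) \le (1/e_0)\, C_{\max}(S')$: on each machine $M_i$ the lifted schedule processes the same jobs in the same order with no unnecessary idle time, and the $k$th interval contributes primary processing at instantaneous rate $e_{i,k} \ge e_0$, so the real time required to discharge the total work $W_i$ assigned to $M_i$ is at most $W_i/e_0$, while $W_i$ is precisely the completion time of $M_i$ in $S'$. Second, $\mathrm{OPT}(I') \le \mathrm{OPT}(I)$: applying the job-to-machine assignment of any optimal schedule for $I$ to the classical instance $I'$ can only make each machine finish sooner, since in $I'$ the rate is always $1 \ge e_{i,k}$. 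Chaining gives $C_{\max}(S) \le (1/e_0)\, C_{\max}(S') \le (\alpha/e_0)\, \mathrm{OPT}(I') \le (\alpha/e_0)\, \mathrm{OPT}(I)$, which is the claimed bound.

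The only mild obstacle I anticipate is making the ``each processing unit costs at most $1/e_0$ real time'' bound robust when a job straddles several shared intervals with different sharing ratios. I would handle this uniformly by describing the lifting via the instantaneous rate function $r_i(t) = e_{i,k}$ for $t$ in the $k$th interval of $M_i$: the real time $\Delta$ needed to accumulate any quantity $p$ of processing satisfies $p = \int r_i(s)\, ds \ge e_0\, \Delta$, so $\Delta \le p/e_0$, and these local bounds aggregate additively along each machine to yield the per-machine completion-time bound used in the first step. With that piece pinned down, the rest of the proof is the two-line chain above.
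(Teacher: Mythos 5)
Your proposal is correct and follows essentially the same route as the paper: the paper's justification is precisely the two facts you chain together, namely that the lifted schedule $S$ has each completion time at most $\tfrac{1}{e_0}$ times its value in $S'$, and that the optimal makespan for $I$ is at least that of $I'$. Your extra care with the instantaneous-rate argument for jobs straddling several shared intervals is a sound (and slightly more explicit) justification of the first of these facts, which the paper simply asserts.
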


Observation~\ref{1-over-c-approx} and the existing literature  \cite{g69,hds87, hs76} imply the following results for $ P_m,  e_{i,k} \ge e_0 \mid \mid C_{\max}$:
 \begin{enumerate}

    \item LS rule is a $ (2-\tfrac{1}{m})\tfrac{1}{ e_0}$-approximation.
    \item LPT rule is a $(\tfrac {4}{3}-\tfrac{1}{3m})\tfrac{1}{ e_0}$-approximation.
    \item There is a $\tfrac {1+\epsilon}{e_0}$ approximation whose running time is $O(n^{2k}\ceiling{\log_2\tfrac{1}{\epsilon}})$, where $k = \ceiling{\log_{1 + \epsilon} \tfrac{1}{\epsilon}}$.
    \item There is a  $\tfrac {1+\epsilon}{e_0}$ approximation whose running time is $O(n(n^2/\epsilon)^{m-1})$.
 \end{enumerate}

 It turns out the above bound for LS rule is quite loose. In the following, we will give a tighter bound.

 \subsubsection{List Scheduling (LS) rule}

 For an instance of $ P_m,  e_{i,k} \ge e_0 \mid \mid C_{\max}$, let $C_{max}(LS)$ and $C_{max}^*$ be the makespan of an arbitrary list schedule and the optimal schedule,  respectively. Then we have the following theorem.

 \begin{theorem} \label{LS-approx}
 For $ P_m,  e_{i,k} \ge e_0 \mid \mid C_{\max}$,     $\tfrac{C_{\max}(LS)}{C_{max}^*} \le  ( 1 + \tfrac{1}{e_0 })$ .
\end{theorem}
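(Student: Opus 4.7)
The plan is to carry out a Graham-style list-scheduling analysis adapted to the shared-processing setting. First, I would identify the job $j^{*}$ that completes last in the LS schedule, and let $s$ denote its start time on its assigned machine $M_{i^{*}}$, so that $C_{\max}(\mathrm{LS}) = s + \tau$, where $\tau$ is the wall-clock duration during which $j^{*}$ is processed. The goal is then to bound $s$ and $\tau$ individually by quantities involving $C^{*}_{\max}$ whose sum is $(1+1/e_{0})C^{*}_{\max}$.

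For $\tau$: because the sharing ratio on $M_{i^{*}}$ throughout the processing of $j^{*}$ is at least $e_{0}$, effective work on $j^{*}$ accrues at rate at least $e_{0}$, so $\tau \le p_{j^{*}}/e_{0}$. Combined with the trivial lower bound $C^{*}_{\max}\ge p_{j^{*}}$ (even at full rate $j^{*}$ needs $p_{j^{*}}$ wall-clock to complete in the optimum), this gives $\tau \le C^{*}_{\max}/e_{0}$.

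For $s$: by the LS rule, when $j^{*}$ is assigned every other machine's earliest availability is at least $s$, so all $m$ machines are continuously busy processing primary jobs other than $j^{*}$ on the interval $[0,s]$. Because the sharing ratio on each machine is at least $e_{0}$, the effective work done on each machine during $[0,s]$ is at least $e_{0}s$, for a total of at least $m e_{0} s$ units. I would combine this with the OPT-based inequality $\sum_{j\ne j^{*}}p_{j}\le m\,C^{*}_{\max}-p_{j^{*}}$ (obtained from the fact that the total work rate in any schedule is at most $m$, together with the observation that $j^{*}$ itself occupies at least $p_{j^{*}}$ wall-clock on its OPT machine) and with $p_{j^{*}}\le C^{*}_{\max}$ to conclude $s \le C^{*}_{\max}$. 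Adding the two pieces then yields $C_{\max}(\mathrm{LS}) = s+\tau \le C^{*}_{\max}+C^{*}_{\max}/e_{0}=C^{*}_{\max}(1+1/e_{0})$.

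The main obstacle I expect is extracting the clean bound $s \le C^{*}_{\max}$: the raw work-accounting inequality $m e_{0} s \le \sum_{j\ne j^{*}} p_{j}$ combined with $\sum p_{j}\le m C^{*}_{\max}$ only delivers $s \le C^{*}_{\max}/e_{0}$. Getting down to $C^{*}_{\max}$ requires using the sharper form $m e_{0} s \le m\,C^{*}_{\max}-p_{j^{*}}$ together with $p_{j^{*}} \le C^{*}_{\max}$ in the right way, quite possibly through a short case analysis depending on whether $p_{j^{*}}$ is small or large relative to $e_{0}\,C^{*}_{\max}$: in the ``small'' case $\tau \le C^{*}_{\max}$ already, and in the ``large'' case the deduction $p_{j^{*}}/(me_{0})$ from the bound on $s$ becomes strong enough to compensate. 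This tightening is precisely what separates the stated guarantee $1 + 1/e_{0}$ from the looser bound $(2-1/m)/e_{0}$ that Observation~\ref{1-over-c-approx} gives by transferring Graham's classical $2-1/m$ guarantee, with the ``$+1$'' term reflecting the wall-clock consumed before $j^{*}$ even begins and the ``$1/e_{0}$'' term reflecting the worst-case inflation of $j^{*}$'s own processing.
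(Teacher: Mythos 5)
Your decomposition $C_{\max}(\mathrm{LS}) = s + \tau$ and the bound $\tau \le p_{j^*}/e_0 \le C^*_{\max}/e_0$ match the paper's proof exactly. The gap is in the bound on $s$, and it is a real one: the work-rate accounting you propose cannot deliver $s \le C^*_{\max}$, and the case analysis you sketch as a possible rescue does not close it. Your chain is $m e_0 s \le \sum_{j\ne j^*} p_j \le m C^*_{\max} - p_{j^*}$, which gives $s \le C^*_{\max}/e_0 - p_{j^*}/(m e_0)$; combined with $\tau \le p_{j^*}/e_0$ and $p_{j^*}\le C^*_{\max}$ this yields only $C_{\max}(\mathrm{LS}) \le (2-\tfrac{1}{m})\tfrac{1}{e_0}C^*_{\max}$, i.e.\ exactly the loose bound of Observation~\ref{1-over-c-approx} that the theorem is meant to improve. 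Splitting on whether $p_{j^*}\le e_0 C^*_{\max}$ does not help either: in the ``large'' case one still only gets $s+\tau < (\tfrac{2}{e_0} - \tfrac{1}{m})C^*_{\max}$, which exceeds $(1+\tfrac{1}{e_0})C^*_{\max}$ unless $e_0 \ge \tfrac{m}{m+1}$. The loss is structural: you bound the LS work rate from below by $e_0$ per machine while bounding OPT's from above by $1$ per machine, and that mismatch costs precisely the factor $\tfrac{1}{e_0}$ you are trying to avoid.

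The paper's argument avoids rates altogether by comparing cumulative capacities on the \emph{same} machines. Let $A_i(t)$ denote the total amount of processing machine $M_i$ can complete in $(0,t]$ given its sharing profile. Since all machines are busy on jobs other than $j^*$ throughout $[0,s]$, the LS schedule consumes the full capacity $\sum_i A_i(s)$ on jobs whose total processing time excludes $p_{j^*}$, so $\sum_j p_j \ge \sum_i A_i(s) + p_{j^*} > \sum_i A_i(s)$. The optimal schedule must fit all of $\sum_j p_j$ into $\sum_i A_i(C^*_{\max})$, and since each $A_i$ is nondecreasing (and their sum strictly increasing, as all sharing ratios are positive), this forces $C^*_{\max} > s$. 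With that, $C_{\max}(\mathrm{LS}) = s + \tau < C^*_{\max} + C^*_{\max}/e_0$ follows immediately. So your skeleton is right, but the missing idea is to measure both schedules against the identical capacity functions $A_i$ rather than against a worst-case rate for LS and a best-case rate for OPT.
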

\begin{proof}
 Without loss of generality, we assume $C_{\max}(LS) = C_j$ for some job $j$. Assume that job $j$ starts at time $t$ in the list schedule. This means there is no idle time before $t$ and we must have $ C_{max}^* > t $, and
 $C_{\max}(LS)=C_j  \le  t + \tfrac{p_j}{e_0} \le C_{max}^* + C_{max}^*\cdot \tfrac{1}{e_0} = C_{max}^* ( 1 + \tfrac{1}{e_0 })$.
\end{proof}

\smallskip{ \noindent \bf Complexity of List Scheduling}  Given a list, the list scheduling can be implemented as follows. For a partial schedule, we maintain the completion time of the last job $f_i$, $1 \le i \le m$,  on all machines using a min-heap.
The next job $j$ in the list will be assigned to the machine with minimum $f_i$, which can be found and updated in $O( k + \log m)$ time, where $k$ is the number of intervals during which the job $j$ is scheduled. The total running time to assign all $n$ jobs will be $O(n \log m + \tilde{n})$. If we use LPT rule, then we need additional time to sort the jobs and the total time would be $O(n\log n + n \log m + \tilde{n}) = O (n \log n + \tilde {n})$.

 \subsubsection{Modified List Scheduling - LS-ECT and LPT-ECT}
 If we relax  the constraint on sharing ratios from $ P_m,  e_{i,k} \ge e_0 \mid \mid C_{\max}$,  so that only the intervals on the first $m_1$ machines have bounded sharing ratios, i.e., $P_m, e_{i \le m_1,k} \ge e_0 \mid \mid C_{max}$,
 LS rule could perform very badly. Consider the example of  2 machines, the first machine has a bounded sharing interval $(0, \infty)$ with sharing ratio $e_0$, the second machine has a sharing interval $(0,\infty)$ with sharing ratio $x << e_0$, and we have two jobs of length 1. The LS schedule will schedule one job on each machine and the makespan will be $1/x$, but the optimal schedule will schedule both jobs on the first machine and has a makespan of $\tfrac{2}{e_0}$. The approximation ratio of LS is $\tfrac{e_0}{2 x }$, which  will be arbitrarily large as $x$ gets close to 0.

%Let $I$ be the smallest instance for which LIST has the worst performance. Assume the jobs are ordered  1, 2, \ldots, n, then we must have $C_{max}(\text{LIST}) = C_n$. Suppose not, then we can remove job $n$ from the instance, and for the new instance, the performance ratio of LIST will be the same or worse  the makespan of LIST schedule  will not change, but the optimal makespan may be smaller, which contradicts to the fact that $I$ is the smallest instance for which LIST has the worst performance.

% Assume $C_{\max}(List) = C_j$ for some job $j$.
% By the list schedule, when job $j$ is scheduled at time $t$ on some machine $M_i$ during some interval $I_{i,k}$ with the sharing ratio of $e_{i,k}$, all other machines must be busy at time $t$, and thus we have  $ C_{max}^* > t$, and
%  $C_{\max}(List)=C_j  \le  t + \tfrac{p_j}{e_{i,k}} \le C_{max}^* + C_{max}^*\cdot \tfrac{1}{e_{i,k}} = C_{max}^* ( 1 + \tfrac{1}{e_{i,k}})$. If $i>m_1$ and $e_{i,k}$ is arbitrarily close to 0, then the ratio becomes unbounded and close to $-\infty$. Similarly, SPT rule could also give an approximation ratio as worse as the inverse of the worst unbounded sharing ratio.

%For $Pm||C_{max}$, it has been shown the makespan of the schedule  generated by an arbitrary list is at most  $(2-\tfrac{1}{m})$ times that of the minimum makespan.

%List scheduling has been proven a good heuristic for classical scheduling. The jobs are first ordered according to certain priority rule, and then scheduled one by one to the earliest available machine.

To have a better approximation for the more general problem $P_m, e_{i \le m_1,k} \ge e_0 \mid \mid C_{max}$, we want to consider a modified list scheduling for our model: when we schedule the next job $j$ in the list, instead of scheduling it to the  machine so that it can start as early as possible, schedule it to the machine so that it can complete the earliest.  We will call this heuristic LS-ECT, and we call it LPT-ECT if the list is  in LPT order. We use $C_{max}$({\it LS-ECT}) and $C_{max}$({\it  LPT-ECT}) to denote the makespan of the schedule produced by LS-ECT and LPT-ECT respectively.

Before we do further analysis, we give some simple facts and a  claim that we will use frequently later.
 Let $S$ be any schedule without any unnecessary idle time. Let $f_i$, $1 \le i \le m$, denote the completion time of the last job on machine $M_i$ in $S$.
\begin{enumerate}
\item[] {\bf Fact 1:}  There exists at least one $f_i$ such that   $ f_i \le C_{max}^*$.
\item[] {\bf Fact 2:}  Let $S$ be an LS-ECT schedule. If the last job $j$ on machine $M_{i_1}$ starts at a time later than $f_{i_2}$, then rescheduling job $j$ to $M_{i_2}$ does not decrease its completion time.
\end{enumerate}

% \Paragraph{Approximation Ratio}

% Before we give the analysis of the approximation ratio of LS-ECT, we first give the following claim.

\begin{claim}\label{claim:number-of-small-jobs}
Let $S$ be an LS-ECT schedule and $X$ be the number of jobs that complete after $C_{max}^*$ on the first $m_1$ machines, if $ f_i > C_{max}^*$ for all $1 \le i \le m_1$, then $X \le m-1$.
\end{claim}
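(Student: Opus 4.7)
The plan is to prove the claim by contradiction, assuming $X \geq m$.  By Fact~1 together with the premise, there must exist a machine $M_{i_2}$ with $i_2 > m_1$ and $f_{i_2} \leq C_{\max}^*$.

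The first step is a structural observation: on each first-$m_1$ machine $M_i$, the late jobs form a contiguous suffix.  This is because jobs on a single machine are scheduled back-to-back in LS-ECT, so if a job $j$ on $M_i$ completes at $C_j > C_{\max}^*$, the next job on $M_i$ (if any) starts at $C_j$ and is also late.  Consequently, each first-$m_1$ machine contributes exactly one earliest late job (a \emph{straddler}, with start time $\leq C_{\max}^*$) together with some number of \emph{after} jobs (start time $> C_{\max}^*$).  Writing $X = m_1 + N$ where $N$ is the total number of after jobs across the first $m_1$ machines, the supposition $X \geq m$ becomes $N \geq m - m_1$.

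For each after job $j$ placed on $M_i$ with $i \leq m_1$, I would apply LS-ECT's greedy rule at $j$'s placement step: the current load on $M_{i_2}$ satisfies $c_{i_2} \leq f_{i_2} \leq C_{\max}^* < s_j$, yet LS-ECT preferred $M_i$, so processing $p_j$ on $M_{i_2}$ starting at $c_{i_2}$ would require at least $C_j - c_{i_2}$ time.  The analogous comparison against any other machine with current load $\leq C_{\max}^*$ at $j$'s placement imposes a similar slowness constraint on that machine; I would push this reasoning to argue that some machine $M_{i_3}$ with $i_3 > m_1$, $i_3 \neq i_2$, must already carry current load exceeding $C_{\max}^*$ at $j$'s placement step.

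The counting step is then to build an injection from the $N$ after jobs into the set $\{i > m_1 : f_i > C_{\max}^*\}$, which by Fact~1 has cardinality at most $m - m_1 - 1$.  Each after job is mapped to such a committed witness $M_{i_3}$; injectivity follows by combining the monotone growth of LS-ECT's current loads with Fact~2 applied to the terminal job of $M_{i_3}$, which pins the witness machine to a single after job.  This yields $N \leq m - m_1 - 1$, hence $X = m_1 + N \leq m - 1$, contradicting $X \geq m$.

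The main obstacle is precisely this injective charging.  Because LS-ECT's state evolves monotonically and the list order of jobs is arbitrary, two distinct after jobs could a priori nominate the same committed machine as their witness; ruling this out demands a careful monotonicity argument on how current loads grow between the two placement steps, together with a delicate use of Fact~2 to lock each witness to a unique after job.  A further subcase may be needed to handle an after job whose LS-ECT obstruction comes from another first-$m_1$ machine rather than from a last-$(m - m_1)$ machine, requiring a symmetric greedy-optimality argument to assign it a distinct image in the target set.
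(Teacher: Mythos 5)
Your decomposition of the late jobs into one straddler per machine plus $N = X - m_1$ ``after'' jobs is correct and matches the structure the paper implicitly relies on, but the charging argument you build on top of it does not go through, and you have in fact flagged the exact place where it breaks. The foundational step --- that each after job $j$ forces some machine $M_{i_3}$ with $i_3 > m_1$ to already be committed past $C_{max}^*$ at $j$'s placement step --- is false. LS-ECT may refuse to put $j$ on a lightly loaded last machine simply because that machine's sharing ratios beyond $C_{max}^*$ are tiny (they are arbitrary for $i > m_1$), so $j$ would complete very late there even though the machine is nearly empty. In that scenario your witness set $\{i > m_1 : f_i > C_{max}^*\}$ can be empty while $N \ge 1$, so no injection into it can exist; and even where witnesses do exist, the injectivity is left unproved and you concede it is the main obstacle. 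As written, the proof is incomplete and the proposed route cannot be completed in the stated form.

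The paper avoids per-job charging entirely with a single global rescheduling. Assuming $X \ge m$, there are $N \ge m - m_1$ after jobs; remove $m-m_1$ of them, always keeping each machine's straddler, and append one removed job to the end of each of the last $m - m_1$ machines. Fact 2 guarantees each moved job still completes after $C_{max}^*$ --- and crucially this holds whether the receiving machine is loaded or merely slow, which is precisely the case distinction your witness argument cannot handle. Because the late jobs form a suffix on each of the first $m_1$ machines, these removals leave each such machine still finishing after $C_{max}^*$ and introduce no idle time. The resulting schedule has every machine busy past $C_{max}^*$, contradicting Fact 1. You had every ingredient (the suffix structure, Fact 2, Fact 1) but assembled them into a local injection rather than this one-shot global move; replacing your counting step with the simultaneous rescheduling closes the gap.
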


%we have $ f_i > C_{max}^*$ for all $1 \le i \le m_1$, and for some machine $M_{i'}$, $i' > m_1$, $f_{i'} < C_{max}^*$. We have the following claim.

%that there are at most $(m-1)$ jobs that complete after $C_{max}^*$ on the first $m_1$ machines.
\begin {proof}  Suppose not, then $X \ge m$.
%, that is, there are at least $m$ jobs that finish after $C_{max}^*$ on the first $m_1$ machines.
Among these $X$ jobs, we remove $(m-m_1)$ jobs so that the last job on the first $m_1$ machines are still completed after $C_{max}^*$, and reschedule each of the $(m-m_1)$ jobs to one of the last $m-m_1$ machines. By Fact 2, the completion times of all these jobs do not decrease, so their  completion times are all still larger than $C_{max}^*$,
%we can move at most  $(m-m_1)$ jobs to those remaining machines so that
and thus all machines are busy after $C_{max}^*$, which contradicts Fact 1. This concludes the proof of the claim.
\end{proof}
\begin{figure}
\centering
 \includegraphics[width=0.8\textwidth, height=2.8in]{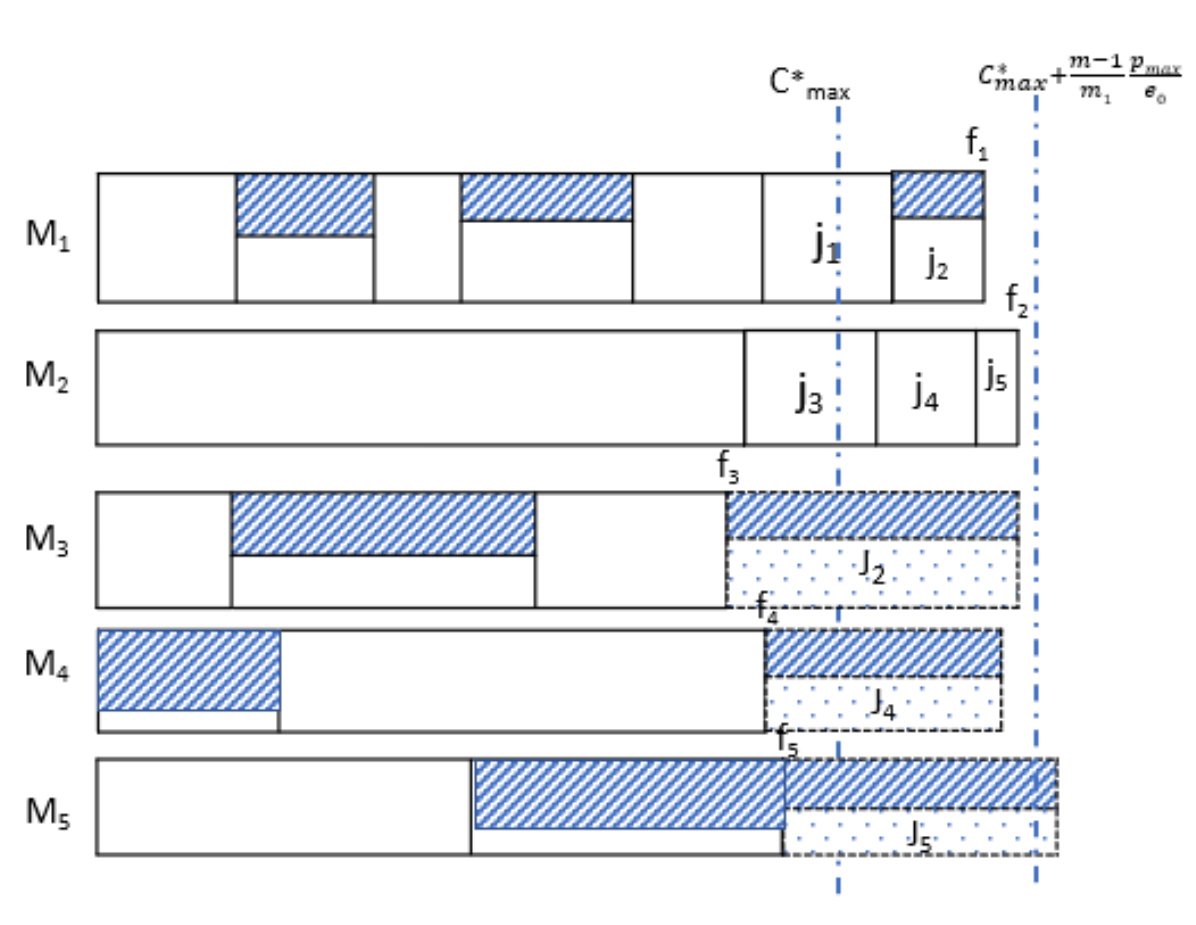}
\caption{An LS-ECT schedule for $P_m, e_{i \le m_1,k} \ge e_0 \mid
\mid C_{max}$ where  $m= 5$, $m_1 =2$, and $f_i \ge C_{max}^*$ for
all $i \le m_1$. It is impossible that 5 or more jobs finish later
after $C_{max}^*$ on the first $m_1=2$ machines.}
\label{fig:aPTAS-greedy-2}
\end{figure}
 Figure~\ref{fig:aPTAS-greedy-2} illustrates the proof of Claim~\ref{claim:number-of-small-jobs} where $m=5$, $m_1=2$,  and $X=5$. Among the 5 jobs that finish after $C_{max}^*$ on the first 2 machines, if we reschedule the jobs  $2, 4, 5$ to the last three machines, respectively,   all the machines are busy after $C_{max}^*$, and this  contradicts Fact 1.

In the following, we consider the performance of LS-ECT for the problem $P_m, e_{i \le m_1,k} \ge e_0 \mid \mid C_{max}$.

\begin{theorem}\label{theorem-LS-ECT-general-m1}
 For $P_m, e_{i \le m_1,k} \ge e_0 \mid \mid C_{max}$, $$\frac{C_{max}(\text{LS-ECT})}{C_{max}^*} \le  \left(1+ \tfrac{1}{e_0} \cdot  \left(\floor{\tfrac{m-1}{m_1}} +1 \right)\right).$$
\end{theorem}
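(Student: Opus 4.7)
The plan is to sharpen the argument of Theorem~\ref{LS-approx} by coupling the LS-ECT rule with Claim~\ref{claim:number-of-small-jobs}, and then extracting a ``good'' bounded-sharing machine by a pigeonhole step over the first $m_1$ machines.

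Let $j^*$ be a job whose completion time equals $C_{max}(\text{LS-ECT})$, placed on machine $M_{i^*}$. Because LS-ECT schedules every job without unnecessary idle, $j^*$ is the last job on $M_{i^*}$ and hence the last job the algorithm ever assigned to $M_{i^*}$. The first key inequality I would prove is
\[
f_i \;\ge\; C_{max}(\text{LS-ECT}) - p_{j^*}/e_0 \qquad \text{for every } i \le m_1.
\]
For $i = i^*$ this is immediate. For $i \ne i^*$ with $i \le m_1$, the LS-ECT rule placed $j^*$ on $M_{i^*}$ rather than on $M_i$, so if $t_i$ denotes the completion time of $M_i$ at the moment $j^*$ was being placed, then $t_i + p_{j^*}/e_0 \ge C_{max}(\text{LS-ECT})$ (since on a machine with sharing ratio $\ge e_0$ the job $j^*$ would finish by $t_i + p_{j^*}/e_0$, and LS-ECT chose $i^*$ to minimize this). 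Since no job is ever removed from $M_i$, we have $f_i \ge t_i$, yielding the claimed inequality.

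Now I split on the size of $C_{max}(\text{LS-ECT})$. If $C_{max}(\text{LS-ECT}) \le C_{max}^*(1 + 1/e_0)$, the target bound holds trivially. Otherwise, using $p_{j^*} \le C_{max}^*$ (any job takes at least $p_{j^*}$ wall-clock time in the optimum), the displayed inequality gives $f_i > C_{max}^*$ for every $i \le m_1$. Claim~\ref{claim:number-of-small-jobs} then applies, so at most $m-1$ jobs complete after $C_{max}^*$ on the first $m_1$ machines. By pigeonhole, some machine $M_{i_0}$ with $i_0 \le m_1$ carries at most $\floor{(m-1)/m_1}$ such late jobs.

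The final step is to bound $f_{i_0}$. The earliest late job on $M_{i_0}$ starts no later than $C_{max}^*$, because the preceding job (if any) on $M_{i_0}$ finished by $C_{max}^*$ by the choice of ``first late''. Each of the $\le \floor{(m-1)/m_1}$ late jobs has $p_j \le C_{max}^*$ and consumes at most $p_j/e_0 \le C_{max}^*/e_0$ wall-clock time, since $M_{i_0}$ is a bounded-sharing machine. Summing gives
\[
f_{i_0} \;\le\; C_{max}^*\left(1 + \tfrac{1}{e_0}\floor{\tfrac{m-1}{m_1}}\right).
\]
Combining with $C_{max}(\text{LS-ECT}) \le f_{i_0} + p_{j^*}/e_0 \le f_{i_0} + C_{max}^*/e_0$ (from the first inequality at $i=i_0$) yields exactly the target bound. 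The main obstacle I anticipate is a careful justification of the first inequality, because LS-ECT's decisions are governed by the \emph{partial} schedule at the moment $j^*$ is placed, not the final one; the argument relies on the two observations that on $M_{i^*}$ the partial and final completion times coincide (since $j^*$ is the last placement there), while on other machines the completion time can only grow.
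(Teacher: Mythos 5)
Your proof is correct and follows essentially the same route as the paper's: use the ECT property to get $C_{max}(\text{LS-ECT}) \le f_i + p_{j^*}/e_0$ for every bounded-sharing machine, split on whether some $f_i \le C_{max}^*$ with $i \le m_1$, and in the hard case invoke Claim~\ref{claim:number-of-small-jobs} plus pigeonhole to find a machine with at most $\floor{(m-1)/m_1}$ late jobs. The only real difference is that you justify the key inequality directly via the monotonicity $f_i \ge t_i$ of partial completion times, whereas the paper instead takes a smallest worst-case instance so that the makespan job is the last job placed; your handling of this detail is, if anything, slightly cleaner.
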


\begin{proof}
Let $I$ be the smallest instance for which LS-ECT has the worst performance. Assume the jobs are ordered  1, 2, \ldots, n, then we must have $C_{max}(\text{LS-ECT}) = C_n$. Suppose not, then we can remove job $n$ to get a new instance. For the new instance, the performance ratio of LS-ECT will be the same or worse because the makespan of LS-ECT schedule  will be the same, but the optimal makespan may be smaller, which contradicts to the fact that $I$ is the smallest instance for which LS-ECT has the worst performance.

Let $S$ be the schedule obtained for $I$ using LS-ECT where $f_i$, $1 \le i \le m$, is the completion time of the last job on machine $M_i$ in the LS-ECT schedule. By Fact 1, there exists at least one $f_i$ such that   $f_i \le C_{max}^*$.

% As in  Theorem~\ref{theorem-LS-ECT-big-m1}, we consider  the smallest instance $I$ for which LS-ECT has the worst performance. Let $S$ be the schedule, and assume the jobs are ordered  1, 2, \ldots, n,
% then we must have $C_{max}(\text{LS-ECT}) = C_n$.
% Let $f_i$, $1 \le i \le m$, denote the completion time of the last job on machine $M_i$ in the LS-ECT schedule.

If  $ f_i \le C_{max}^*$ for some $i \le m_1$,  then if we reschedule job $n$ to $M_i$, its new completion time will be at most
$f_i + p_n/e_0 \le (1 + \tfrac{1}{e_0}) C_{max}^*$. By Fact 2, this will   not be better than its original completion time.   Thus in this case we have $C_{max}(\text{LS-ECT}) = C_n \le  (1 + \tfrac{1}{e_0}) C_{max}^*$. See Figure~\ref{fig:LS-ECT-m1-lastjob} for an illustration.
  \begin{figure}
  \centering
 \includegraphics[width=0.8\textwidth, height=2.0in]{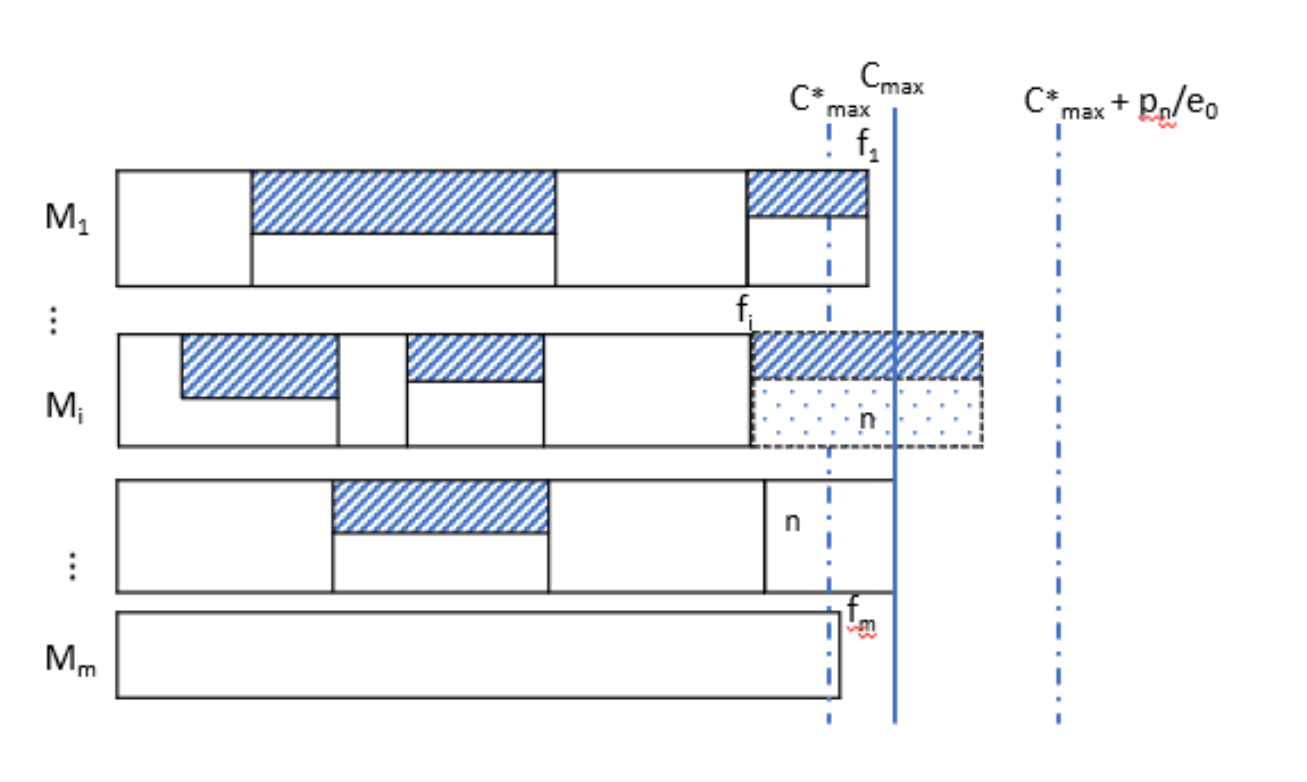}
\caption{An  LS-ECT schedule for $P_m, e_{i \le m_1,k} \ge e_0 \mid \mid C_{max}$  where $f_i \le C_{max}^*$, $i \le m_1$. If job $n$ were rescheduled to $M_i$, it would finish no earlier than its current completion time.} \label{fig:LS-ECT-m1-lastjob}
\end{figure}

 %we get  $C_{max}(\text{LS-ECT}) = C_n \le  (1 + \tfrac{1}{e_0}) C_{max}^*$ as in  Theorem~\ref{theorem-LS-ECT-big-m1}.

Otherwise, we have $ f_i > C_{max}^*$ for all $1 \le i \le m_1$, and for some machine $M_{i'}$, $i' > m_1$, $f_{i'} < C_{max}^*$. By Claim~\ref{claim:number-of-small-jobs}, we know there are at most $X \le m-1$ jobs complete after $C_{max}^*$ on the first $m_1$ machines. By pigeon hole principle, among the first $m_1$ machines, there exists a machine on which there are at most $\floor{X/m_1} \le \floor{(m-1)/m_1}$ jobs finishing after $C_{max}^*$. Without loss of generality, suppose $n$ is not scheduled on this machine, then moving job $n$ to this machine will not decrease its completion time. Let $p_{max} = \max_{1 \le j \le n} p_j$,  then we have
$$C_{max} = C_n \le C_{max}^* +  \floor{\tfrac{m-1}{m_1}}\tfrac{ p_{max}}{e_0} +   \tfrac{p_n}{e_0}  \le \left(1+  \left(\floor{\tfrac{m-1}{m_1}} +1 \right) \cdot \tfrac{1}{e_0} \right) C_{max}^*.$$
\end{proof}

 Theorem~\ref{theorem-LS-ECT-general-m1} implies that if  $m_1 = m$, the  ratio of LS-ECT is $(1+ \tfrac{1}{e_0})$ which agrees with the ratio of LS rule from Theorem~\ref{LS-approx}. If $m_1 = m-1$,
the approximation from    Theorem~\ref{theorem-LS-ECT-general-m1}  becomes $(1 + \tfrac{2}{e_0})$, which is worse than the case $m_1 = m$. In the following we show that in this case, the approximation ratio is actually still bounded by $(1 + \tfrac{1}{e_0})$.

\begin{theorem}\label{theorem-LS-ECT-big-m1}
For $P_m, e_{i \le m-1,k} \ge e_0 \mid \mid C_{max}$,
$$\frac{C_{max}\text{(LS-ECT)}}{C_{max}^*} \le 1 + \tfrac{1}{e_0 }.$$
\end{theorem}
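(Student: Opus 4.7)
The plan is to refine the argument of Theorem~\ref{theorem-LS-ECT-general-m1} by exploiting the structural rigidity that arises specifically when $m_1 = m-1$. As in that proof, I would take $I$ to be a smallest instance on which LS-ECT attains its worst ratio, so that the makespan is realized by the last job in the list, call it $n$, scheduled on some machine $M_{i^*}$. By Fact~1 some $f_i \le C_{max}^*$. If such an index satisfies $i \le m-1$, then since LS-ECT preferred $M_{i^*}$ over $M_i$ for job $n$, placing $n$ on $M_i$ would give a completion time at least $C_n$; but that completion time is at most $f_i + p_n/e_0 \le (1 + 1/e_0)C_{max}^*$, giving the desired bound immediately.

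The interesting case is when $f_i > C_{max}^*$ for every $i \le m-1$, which by Fact~1 forces $f_m \le C_{max}^*$. Here each of the $m-1$ bounded machines already carries at least one late job, namely its own last job. On the other hand, Claim~\ref{claim:number-of-small-jobs} bounds the total number of late jobs on those machines by $m-1$. Combining the two, each of the first $m-1$ machines carries \emph{exactly} one late job. This structural strengthening, available only because the number of bounded machines matches the pigeonhole bound from the claim, is the engine of the improvement.

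It then remains to locate $n$. If $i^* = m$, then $C_n = f_m \le C_{max}^* \le (1 + 1/e_0)C_{max}^*$ and we are done. Otherwise $i^* \le m-1$, and by the uniqueness just derived $n$ is the only late job on $M_{i^*}$; hence the job immediately preceding $n$ on $M_{i^*}$, if any, is not late and so completes at some time $s_n \le C_{max}^*$, which is the start time of $n$ (and $s_n = 0$ trivially if $n$ stands alone). Because the sharing ratio on $M_{i^*}$ is at least $e_0$, I obtain $C_n \le s_n + p_n/e_0 \le C_{max}^* + C_{max}^*/e_0 = (1 + 1/e_0)C_{max}^*$, using $p_n \le C_{max}^*$.

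The main obstacle is conceptual rather than technical: specializing Theorem~\ref{theorem-LS-ECT-general-m1} to $m_1 = m-1$ would give only $1 + 2/e_0$, because that proof moves $n$ to a machine still carrying one late job and then pays $p_{max}/e_0$ for that late job \emph{plus} $p_n/e_0$ for $n$. The key insight of the plan above is to avoid the move-and-pay-twice step altogether: once $n$ is identified as the sole late job on its own machine, its start time is already bounded by $C_{max}^*$, and the single $p_n/e_0$ term is everything we owe.
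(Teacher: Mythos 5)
Your proposal is correct and follows essentially the same route as the paper's proof: the same case split on whether some machine among the first $m-1$ finishes by $C_{max}^*$, the same use of Claim~\ref{claim:number-of-small-jobs} together with the pigeonhole observation to conclude that each bounded machine carries exactly one late job, and the same conclusion that job $n$ therefore starts at or before $C_{max}^*$, yielding $C_n \le C_{max}^* + p_n/e_0$. The only cosmetic difference is that you dispose of the subcase where $n$ sits on $M_m$ explicitly, whereas the paper notes directly that $n$ cannot be on $M_m$ since $f_m < C_{max}^*$.
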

\begin{proof}
Let $I$ be the smallest instance for which LS-ECT has the worst performance. Assume the jobs are ordered  1, 2, \ldots, n, then we must have $C_{max}(\text{LS-ECT}) = C_n$.

Let $S$ be the schedule obtained for $I$ using LS-ECT where $f_i$, $1 \le i \le m$, is the completion time of the last job on machine $M_i$ in $S$. By Fact 1, there exists at least one $f_i$ such that $f_i \le C_{max}^*$.

If $i \le m-1$, then as in the proof of Theorem~\ref{theorem-LS-ECT-general-m1},
 %if we reschedule job $n$ to $M_i$, its completion time will be at most
%$f_i + p_n/e_0 \le (1 + \tfrac{1}{e_0}) C_{max}^*$. By Fact 2, this will be not better than its original completion time  we use LS-ECT.   Thus in this case
we have $C_{max}(\text{LS-ECT}) = C_n \le  (1 + \tfrac{1}{e_0}) C_{max}^*$.

 Otherwise, there is no $i$ such that  $i \le m-1$  and $f_i \le C_{max}^*$. This implies (1) on each of the first $(m-1)$ machines, at least one job finishes after $C_{max}^*$, (2) $f_m < C_{max}^*$, and (3) job $n$ must be  scheduled on one of the first $(m-1)$ machines.  By Claim~\ref{claim:number-of-small-jobs}, there are at most $(m-1)$ jobs that complete after $C_{max}^*$ on the first $m-1$ machines. Combining with (1), there is exactly one job that completes after $C_{max}^*$  for each $M_i$, $i\le m-1$.
 %
 %Suppose this is not the case, i.e., there are two jobs that finish after $C_{max}^*$ on some machine $M_i$, $i \le m-1$. By Fact 2, if we reschedule the last job (whose completion time is $f_i > C_{max}^*$) on $M_i$ to $M_m$, its completion time will not be decreased and is still larger than $C_{max}^*$. In this new schedule, all machines are still busy after $C_{max}^*$. However, this is impossible by Fact 1.
Therefore, $n$ must be the only job that finishes after $C_{max}^*$ on the machine it is scheduled; and it starts at or before $C_{max}^*$. So we have $C_{max}(\text{LS-ECT}) = C_n \le  C_{max}^*  + p_n/e_0 \le (1 + \tfrac{1}{e_0}) C_{max}^*$.
\end{proof}

A natural question follows from %Theorem~\ref{theorem-LS-ECT-general-m1} and
Theorem~\ref{theorem-LS-ECT-big-m1} is if the performance of LS-ECT is still $(1 + \tfrac{1}{e_0})$ when $m_1 = m-2$. We show by an example that this is unfortunately not the case. We have $m=3$ and $m_1 = 1$. The first machine has a sharing ratio of ${e_0}$ during interval $[x+2, \infty)$, the other 2 machines have  sharing ratio $\tfrac{e_0}{3x}$ during interval $[x, \infty)$, and all other intervals have sharing ratio 1; there are 5 jobs whose processing times are x, 1, 1, x, x. If we use  this list for LS-ECT, the first machine has three large jobs, and the other two machines each has one small job, and the makespan  is $(x+2 + \tfrac{2x-2}{e_0})$; however in the optimal schedule,  one large job and the two small jobs are scheduled on the first machine, and the other two large jobs are scheduled on the second and third machines, respectively. The  makespan of the optimal schedule is $(x+2)$.  The performance  ratio of LS-ECT approaches to $(1 + \tfrac{2}{e_0})$  as $x$ increases.

Note that if $e_0 = 1$, the approximation ratio obtained from Theorem~\ref{theorem-LS-ECT-big-m1} becomes 2 which is very close to the   ratio  $(2-1/m)$  of LS rule for classical model  where
all machines have full capacity.

We also want to point out that  for  LPT-ECT, we can get slightly better ratios than those for LS-ECT in Theorem~\ref{theorem-LS-ECT-general-m1} and \ref{theorem-LS-ECT-big-m1}.   %$(1 + \tfrac {m}{n e_0})$.
Based on the fact
 $C_{max}^* \ge \tfrac{\sum p_j} {m}$, and the fact for LPT that $ p_n \le \tfrac{\sum p_j}{ n} $, we have $ p_n / e_0  \le   \tfrac {m}{n e_0}   C_{max}^*$. Then we can get the following result for  LPT-ECT.
 \begin{corollary}
For $P_m, e_{i \le m_1,k} \ge e_0 \mid \mid C_{max}$, $$\frac{C_{max}(\text{LPT-ECT})}{C_{max}^*} \le  \left(1+ \frac{1}{e_0} \cdot  \left(\floor{\frac{m-1}{m_1}} + \frac {m}{n} \right)\right),$$
and for $P_m, e_{i \le m-1,k} \ge e_0 \mid \mid C_{max}$, $$\frac{C_{max}(\text{LPT-ECT})}{C_{max}^*} \le 1 + \frac {m}{n e_0}.$$
\end{corollary}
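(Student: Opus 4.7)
The plan is to adapt the proofs of Theorems~\ref{theorem-LS-ECT-general-m1} and~\ref{theorem-LS-ECT-big-m1} almost verbatim, substituting a stronger bound on the processing time of the last-scheduled job. The hint already given in the paper provides the key ingredient: since the list is in LPT order, the job indexed $n$ satisfies $p_n \le \frac{\sum_j p_j}{n}$, and since $C_{max}^* \ge \frac{\sum_j p_j}{m}$, we immediately get $\frac{p_n}{e_0} \le \frac{m}{n e_0}\,C_{max}^*$. Every place where the earlier LS-ECT proofs used the crude bound $p_n/e_0 \le C_{max}^*/e_0$ can be replaced by this sharper bound.

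First I would redo the ``smallest counterexample'' reduction: pick a smallest instance for which LPT-ECT attains its worst ratio, and argue that in such an instance the makespan is realized by the last job in the LPT list (otherwise deletion of that job produces a strictly smaller counterexample without changing the LPT-ECT makespan while only weakening the optimum). This is the same argument used for LS-ECT and goes through unchanged because removing a job preserves LPT order. Next I would fix an LPT-ECT schedule $S$, let $f_i$ denote the completion time of the last job on $M_i$, and invoke Fact~1 to locate some $f_i \le C_{max}^*$.

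For the first statement (general $m_1$), I would split into the same two cases as in Theorem~\ref{theorem-LS-ECT-general-m1}. When the index $i$ with $f_i \le C_{max}^*$ satisfies $i \le m_1$, Fact~2 lets me bound $C_n \le f_i + p_n/e_0 \le C_{max}^* + \frac{m}{n e_0} C_{max}^*$. When all $f_i > C_{max}^*$ for $i \le m_1$, Claim~\ref{claim:number-of-small-jobs} together with pigeonhole gives a machine among the first $m_1$ holding at most $\floor{(m-1)/m_1}$ jobs that finish after $C_{max}^*$; rerouting job $n$ there yields
$$C_n \;\le\; C_{max}^* \;+\; \floor{\tfrac{m-1}{m_1}}\tfrac{p_{\max}}{e_0} \;+\; \tfrac{p_n}{e_0} \;\le\; \left(1 + \tfrac{1}{e_0}\bigl(\floor{\tfrac{m-1}{m_1}} + \tfrac{m}{n}\bigr)\right) C_{max}^*,$$
where I still use $p_{\max} \le C_{max}^*$ for the middle term but insert $p_n \le \frac{m}{n}C_{max}^*$ for the last one. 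For the second statement ($m_1 = m-1$), I would mirror the proof of Theorem~\ref{theorem-LS-ECT-big-m1}: either some $f_i \le C_{max}^*$ with $i \le m-1$ (move $n$ there, bound by $C_{max}^* + p_n/e_0$), or the structural argument based on Claim~\ref{claim:number-of-small-jobs} forces $n$ to be the sole job finishing after $C_{max}^*$ on its machine and to start no later than $C_{max}^*$, again yielding $C_n \le C_{max}^* + p_n/e_0 \le (1 + \tfrac{m}{n e_0}) C_{max}^*$.

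I do not expect a real obstacle here, since the structural skeleton is inherited from the two previous theorems; the only genuine new content is the LPT estimate $p_n \le \frac{m}{n}C_{max}^*$, and it plugs cleanly into the single place in each proof where the processing time of the last-assigned job appears. The one mildly delicate point is to verify that in the general-$m_1$ case the \emph{other} jobs finishing after $C_{max}^*$ still need the coarser bound $p_{\max}/e_0 \le C_{max}^*/e_0$: LPT tells us nothing useful about those jobs individually, which is why the $\floor{(m-1)/m_1}$ coefficient is not shrunk and only the trailing $+1$ improves to $+\,m/n$.
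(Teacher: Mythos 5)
Your proposal follows exactly the route the paper intends: take the two LS-ECT proofs and substitute the sharper estimate $p_n/e_0 \le \tfrac{m}{n e_0}C_{max}^*$ at the one place where $p_n$ appears. The difficulty is in the very first step, the minimal-counterexample reduction to $C_{max}(\text{LPT-ECT}) = C_n$, which you import from Theorems~\ref{theorem-LS-ECT-general-m1} and~\ref{theorem-LS-ECT-big-m1}. There the reduction is sound because the target ratio $1+\tfrac{1}{e_0}\bigl(\floor{\tfrac{m-1}{m_1}}+1\bigr)$ does not depend on $n$: deleting trailing jobs can only worsen the ratio, so a minimal worst instance has its makespan realized by the last job. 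Here the claimed bound \emph{decreases} in $n$, so after deleting job $n$ the truncated instance must violate the strictly larger threshold $1+\tfrac{1}{e_0}\bigl(\cdots+\tfrac{m}{n-1}\bigr)$ in order to remain a counterexample, and the contradiction evaporates. What the argument actually yields is a bound in terms of the index $j$ of the job attaining the makespan (i.e.\ $m/j$, or equivalently $p_j$ in place of $p_n$), not $m/n$.

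The gap is not cosmetic: the statement with the instance's own $n$ fails. Take $m=m_1=2$, no shared intervals (so $e_0=1$), jobs $3,3,2,2,2$ followed by $k$ tiny jobs of total length $\delta<2$. Every LPT-ECT schedule coincides with LPT, ends one machine at $7$ after the five large jobs, and sends all tiny jobs to the other machine, so $C_{max}(\text{LPT-ECT})=7$ while $C_{max}^*=6+\delta/2$; the ratio tends to $7/6$ while the claimed bound $1+m/(ne_0)=1+2/(5+k)$ tends to $1$. So the ``same skeleton plus the LPT estimate on $p_n$'' cannot close as written; one must either replace $\tfrac{m}{n}$ by $\tfrac{m}{j}$ for the critical job $j$, or restrict to instances in which the last job determines the makespan. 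To be fair, the paper's own one-line justification preceding the corollary rests on the identical step, so you have reproduced its reasoning faithfully --- but the reasoning itself does not go through.
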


%\subsubsection{Data Structure}
\paragraph{Time Complexity}\label{ECT-time}

%Paragraph{Data Structure}

%Paragraph{ Complexity of ECT scheduling}
Compared with LS rule, LS-ECT takes more time  for each job because we need to compare its completion times on all machines and schedule it to the machine so that it completes at the earliest time.

To implement LS-ECT, considering a partial schedule where  jobs $1$, $2$, $\ldots$, $j-1$ have been scheduled, we maintain a pair $(P_i, f_i)$ for each machine $M_i (1 \le i \le m)$, where $P_i$ is the total processing time of the jobs that have been assigned to $M_i$, and $f_i$ is the completion time of the last job on $M_i$. Moreover, for each interval on machine $M_i$,  $I_{i,1} = (0, t_{i,1}]$, $I_{i,2} = (t_{i,1}, t_{i,2}]$, $\ldots$, we maintain a quadruple $(t_{i,k-1}, t_{i,k}, e_{i,k}, A_i(t_{i,k}))$ where $e_{i,k}$  is the sharing ratio of the interval $(t_{i,k-1}, t_{i,k}])$ and $A_i(t_{i,k})$ is the total amount of the jobs that can be scheduled before $t_{i,k}$.
For convenience of scheduling, we pre-calculate $A_i(t_{i,k})$ for every $
t_{i,k}$ on machine $M_i$.  Note that $A_i(t_{i,k}) = A_i(t_{i,k-1}) + e_{i,k}(t_{i,k} - t_{i,k-1})$. Assuming the intervals are given  in sorted order, the calculation can be done in $O(k_i)$ time for machine $M_i$ where $k_i$ is the number of shared intervals on $M_i$, and in $O(\tilde {n})$ for all $m$ machines.

%We use a triple $(t_{i,k-1}, t_{i,k}, e_{i,k})$ to represent an interval $[t_{i,k-1}, t_{i,k}])$ on machine $M_i$ with the sharing ratio $e_{i,k}$, $0 < e_{i,k} \le 1$. For any time point $t$, we let $A_i(t)$ denote the total amount of processing time of jobs that can be processed during $(0,t]$ on machine $M_i$, that is, if $t_{i,k} < t \le t_{i,k+1}$, $A_i(t) = ( \sum_{u \le k} (t_{i,u}-t_{i,u-1})e_{i,u}) + (t-t_{i,k}) e_{i,k+1}$.

To assign a job $j$ to this partial schedule, for each machine $M_i$, we can use binary search on the continuous intervals $I_{i,1}$, $I_{i,2}$, $\ldots$, to find the interval $(t_{i,k-1}, t_{i,k}]$ such that $A_i(t_{i,k-1}) <  P_i + p_j \le A_i(t_{i,k}$). Then job $j$'s completion time on machine $M_i$ can be calculated as $t = t_{i,k-1} + \tfrac{P_i + p_j - A_i(t_{i,k-1})}{e_{i,k}}$.
%where $e_{i,k}$ is the sharing ratio of the interval $[t_{i,k-1}, t_{i,k})$.
After all machines are considered, we assign job $j$ to the machine so it  completes the earliest.
%Suppose there are $k_i$ intervals on $M_i$.
In total, it takes $O((\sum_{i=1}^{m} \log k_i)+m) $ time to assign a  job to the machine so it completes the earliest. So the overall time for scheduling $n$ jobs using LS-ECT is $O(\tilde {n} + n(m  + \sum_{i=1}^m \log k_i))$, and   the total time for  LPT-ECT is  $O(n\log n + \tilde {n} + n(m  + \sum_{i=1}^m \log k_i))$.

\subsubsection{Comparison of List Scheduling and Modified List Scheduling}

While LS gives the approximation ratio of $( 1 + \tfrac{1}{e_0 })$ for problem $P_m, e_{i,k} \ge e_0 \mid \mid C_{max}$, it can perform badly
for problem $P_m, e_{i \le m_1,k} \ge e_0 \mid \mid C_{max}$. In comparison, the approximation ratio of  LS-ECT for $P_m, e_{i \le m_1,k} \ge e_0 \mid \mid C_{max}$ is  $\left(1+ \tfrac{1}{e_0} \cdot  \left(\floor{\tfrac{m-1}{m_1}} +1 \right)\right)$. And when $m_1=m-1$, LS-ECT gives the approximation ratio of $( 1 + \tfrac{1}{e_0 })$. Similar conclusions hold  for LPT rule and LPT-ECT rule.

Now we consider the  performance of LPT and LPT-ECT when the number of jobs is small. When $n \le 2m$, LPT is optimal for $P_m \mid \mid C_{\max}$, however, this does not hold any more for $P_m, e_{i,k} \ge e_0 \mid \mid C_{max}$ even if $n=2$. Consider the example of two machines and two jobs of length 1. Suppose the first machine doesn't have processor sharing, while the second machine has a sharing interval $[0, \infty)$ with $e_{2,1} = e_0 < \tfrac{1}{2}$. The optimal schedule has both jobs on the first machine with the makespan of 2, but LPT rule schedules one job on each machine and thus has an approximation ratio of $\tfrac{1}{2e_0}$. Although LPT-ECT can find the optimal schedule when  there are only two jobs, it is not optimal anymore when the number of jobs $n \ge 3$.
Consider three jobs of length 3, 2, 2  and two machines. There is a single sharing interval $I_{2,1}=[0, \infty)$, and $e_{2,1} = 3/4$. The LPT-ECT will schedule one shorter job on  $M_2$, and the other two jobs on $M_1$ with the makespan  $5$ while the optimal schedule schedules the longest job on $M_2$ and the 2 shorter jobs on $M_1$, and the makespan is 4.

\subsection{Approximation Scheme}

%In this section, we develop two different approximation schemes that have incomparable running time. When the number of the machines is constant, both run in linear time.
%\subsection{Approximation Scheme Using LS-ECT}

In this section, we develop an %polynomial time
approximation scheme for $P_m, e_{i \le m_1,k} \ge e_0 \mid \mid C_{max}$. The idea is to partition the jobs into two groups, one for large jobs and the other for small jobs. Then we  schedule the large jobs using enumeration; and schedule the small jobs using LS-ECT.  Let $d$ be the number of large jobs which determines the error ratio of the output schedule and will be specified later. Our algorithm is formally presented as follows.
%The number of  large jobs $d$ determines   the error ratio of the output schedule.

\medskip
\noindent {\bf Algorithm 1}

\smallskip
 {\noindent \bf Input:}
     \begin{itemize}
         \item Parameters $m$, $n$, $m_1$,  and $e_0$ %the lower bound $e_0$ of sharing ratios  on the first $m_1$ machines
          \item The intervals   $(0, t_{i,1}]$, $(t_{i,1}, t_{i,2}]$, $\ldots$ on machine $M_i$, $1 \le i \le m$,
          and their sharing ratios $e_{i,1}$, $e_{i,2}$, $\ldots$, respectively
        \item The jobs' processing time $p_j$, $1 \le j \le n$
        \item Integer parameter $d$ that determines the accuracy of approximation
     \end{itemize}
%$n$ jobs; an integer $d$; the lower bound of sharing ratio $e_0$, the intervals on each machine in sorted order, this includes those with full capacity as well, the total number intervals is  $O(\tilde{n}) = O(\sum_{1 \le i \le m} k_i)$.

  {\noindent \bf Output:}  a schedule of the $n$ jobs

  \medskip
   {\noindent \bf Steps:}
\begin{enumerate}
  % \item[1.] Sort the jobs in LPT order, compute $A_i(t_{i,k})$ for all $t_{i,k}$
    \item Find the $d$ largest jobs
%   \item Obtain all possible assignments of $d$ large jobs on $m$ machines by enumeration
    \item For each possible assignment of the large jobs
    \item[] \indent Schedule the remaining jobs using LS-ECT
    \item Return the schedule $S$ obtained from previous step that has the minimum makespan
\end{enumerate}

\noindent For ease of analysis, we first analyze the performance of Algorithm 1 for $m_1 = m$, i.e. $P_m, e_{i,k} \ge e_0 \mid \mid C_{max}$, then the general case $m_1 \le m$, i.e., $P_m, e_{i \le m_1,k} \ge e_0 \mid \mid C_{max}$. Let $S^*$ be an optimal schedule and  $C_{\max}^*$ be the makespan of $S^*$.

\begin{lemma}\label{lemma-algorithm1-error} For an instance of  $P_m, e_{i,k} \ge e_0 \mid \mid C_{max}$, Algorithm 1   returns a schedule  with the makespan at most $(1+\tfrac{m}{d \cdot e_0}) \cdot C_{\max}^*$.
\end{lemma}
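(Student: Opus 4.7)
\medskip
\noindent\textbf{Proof plan for Lemma~\ref{lemma-algorithm1-error}.}

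The plan is to focus on one specific iteration of Algorithm~1, namely the iteration that places the $d$ largest jobs on the same machines as the optimal schedule $S^*$ does. Call the resulting schedule $S$. Since Algorithm~1 returns the best makespan over all enumerated assignments, it suffices to show $C_{\max}(S) \le (1 + \frac{m}{d \cdot e_0})\, C_{\max}^*$. Let $j^\dagger$ be the job realizing $C_{\max}(S)$, i.e.\ $C_{\max}(S) = C_{j^\dagger}$.

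First I would handle the easy case where $j^\dagger$ is one of the $d$ large jobs. Since large jobs are assigned to the same machines as in $S^*$ and the small jobs appended by LS-ECT are only placed after the existing work on each machine, the completion time of any large job in $S$ is no later than the completion time of the full collection of jobs on that machine in $S^*$, which is at most $C_{\max}^*$. Hence $C_{\max}(S) \le C_{\max}^*$ in this case.

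The main work is in the case where $j^\dagger$ is a small job. Here the key bound I will need is on $p_{j^\dagger}$. Let $P = \sum_{j} p_j$. Since $j^\dagger$ is not among the $d$ largest jobs, the $d$ largest ones each have processing time at least $p_{j^\dagger}$, so $d \cdot p_{j^\dagger} \le P$. Combined with $C_{\max}^* \ge P/m$ (which follows from $e_{i,k} \le 1$ giving total work processed by time $C_{\max}^*$ at most $m\, C_{\max}^*$), this yields $p_{j^\dagger} \le \frac{P}{d} \le \frac{m}{d}\, C_{\max}^*$. The hard part will be bounding the start time $t$ of $j^\dagger$ in $S$. I plan to argue via a volume/pigeonhole comparison against $S^*$: consider the moment LS-ECT assigns $j^\dagger$, and let $g_{i'}$ be the current finish time on machine $M_{i'}$ for each $i'$. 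If every $g_{i'}$ exceeded $C_{\max}^*$, then on each machine the work already processed would be at least $A_{i'}(C_{\max}^*)$, which in turn is at least the total work assigned to $M_{i'}$ in $S^*$; summing across all machines gives total processed work at least $P$, contradicting the fact that $j^\dagger$ has not yet been placed. Thus some machine $M_{i'}$ satisfies $g_{i'} \le C_{\max}^*$.

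Finally, since LS-ECT chose the machine minimizing $j^\dagger$'s completion, placing $j^\dagger$ on $M_{i'}$ would give completion at most $g_{i'} + p_{j^\dagger}/e_0 \le C_{\max}^* + p_{j^\dagger}/e_0$ (using $e_{i,k} \ge e_0$ everywhere). Hence
\[
C_{\max}(S) \;=\; C_{j^\dagger} \;\le\; C_{\max}^* + \frac{p_{j^\dagger}}{e_0} \;\le\; \left(1 + \frac{m}{d \cdot e_0}\right) C_{\max}^*,
\]
completing the proof. The subtle step is the volume argument that guarantees a machine with $g_{i'} \le C_{\max}^*$ at the moment $j^\dagger$ is scheduled; everything else is routine once the large jobs are pinned to their optimal machines and $p_{j^\dagger}$ is bounded by $P/d$.
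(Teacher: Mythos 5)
Your proposal is correct and follows essentially the same route as the paper's proof: fix the enumeration branch whose large-job assignment matches $S^*$, observe the makespan-determining job must be small with $p \le \tfrac{m}{d}C_{\max}^*$, and use the ECT choice together with a volume argument showing some machine finishes by $C_{\max}^*$ to get $C_{\max}^* + p/e_0$. The only cosmetic differences are that you bound the small job's length via $P/d$ rather than via $C_{\max}^*\ge d\,p_d/m$, and you apply the volume argument at the moment the job is placed rather than to the final schedule (the paper's Facts 1 and 2); both yield the identical bound.
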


\begin{proof}
 Let $p_{d}$ be the processing time of the $d$-th largest job. Then  we must have $C^*_{\max} \ge \tfrac{d \cdot p_{d}}{m}$. Thus, $p_d \le \tfrac{m}{d} C^*_{\max}$.

Let $S'$ be the schedule obtained from step 2 of Algorithm 1 that
has the same large job assignment as $S^*$. Let $j$ be the job such
that $C_j(S') = C_{max}(S')$. Without loss of generality, we can
assume $C_{max}(S') > C_{\max}^*$. Then $j$ must be a small job. By
Fact 1, there is at least one machine such that the last job on this
machine finishes at or before $C_{\max}^*$ in any schedule. Let
$M_i$ be such a machine in $S'$. Then job $j$ must be scheduled   on
a machine other than $M_i$ in $S'$. By Fact 2, if job $j$ is
rescheduled to $M_i$ in $S'$, then its new completion time, at most
$C^*_{\max} + \frac{p_j } {e_0}$, would not be less than its
original completion time $C_{\max}(S')$ (see
Figure~\ref{fig:aPTAS-greedy-1}). Therefore, we have
\begin{eqnarray*}
C_{\max}(S') & \le  & C^*_{\max} + \frac{p_j } { e_0} \\
& \le   & C^*_{\max} + \frac{1}{ e_0} p_d \\
& \le & C^*_{\max} + \frac{1}{ e_0} \cdot \tfrac{m}{d} C^*_{\max} \\
&\le& \left(1+\frac{m}{d \cdot e_0 }\right) \cdot C^*_{\max}.
\end{eqnarray*}
 \begin{figure}
 \includegraphics[width=\textwidth, height=2.0in]{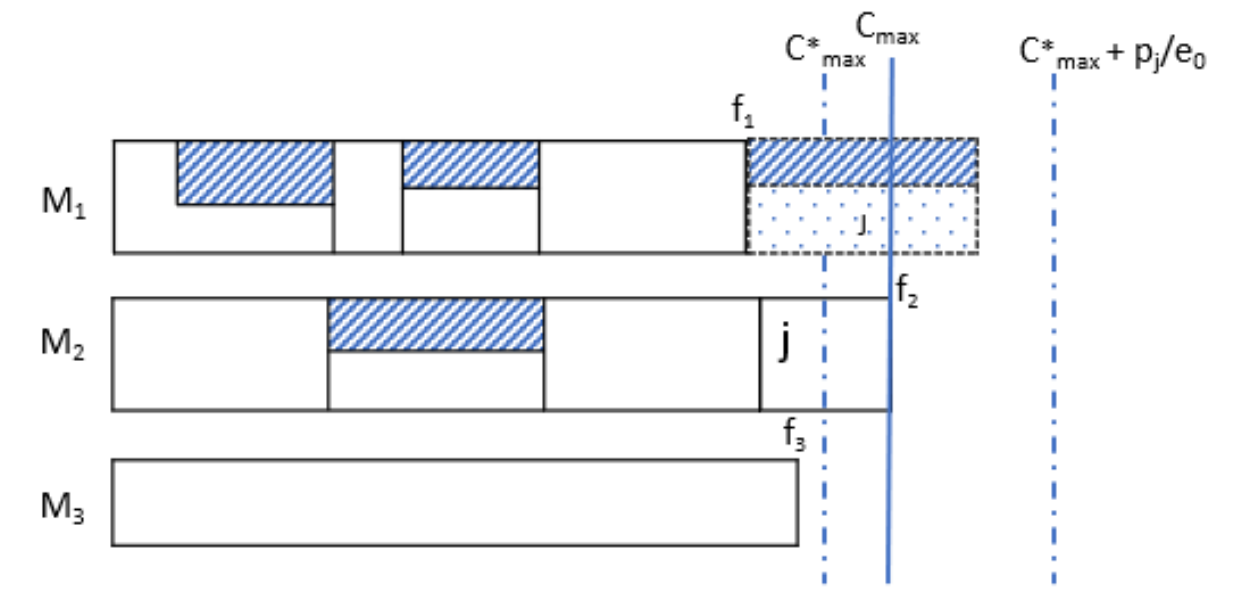}
\caption{Illustration of schedule $S'$ obtained from Algorithm 1:
$C_{max}(S') = C_j$, if job $j$ were scheduled on $M_1$, it would
finish no earlier than its current completion time.}
\label{fig:aPTAS-greedy-1}
\end{figure}

Since Algorithm~1 returns a schedule $S$ with minimum makespan, the above bound is also an upper bound of $C_{\max}(S)$.
%
%Now we consider the running time of Algorithm~1.  We  first sort the primary jobs in $O( n \log n)$ time. Then it is easy to find the first $d$ largest jobs.
%there are at most $m^d$ ways to  assign these large jobs to $m$ machines.
%For each of these partial schedules, scheduling the remaining jobs in LPT-ECT takes at most $O((\tilde{n}+m) (n-d)) = O((m + \tilde{n}) n)$ time. Thus the total time is $O(n \log n + m^d (m + \tilde{n})n)$.
\end{proof}

\begin{lemma}\label{lemma-algorithm1-time}
%Let $1 \le d \le n$  be an integer.
Algorithms 1 can be implemented in $O( \tilde{n} + n+ (m^{d} (n-d) (m+ \sum_{i=1}^{m} \log k_i)) )$ time.
\end{lemma}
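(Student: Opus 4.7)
The plan is to decompose the running time of Algorithm~1 into three pieces: (i) a one-time preprocessing phase, (ii) the enumeration over the $m^d$ candidate placements of the large jobs, and (iii) the per-assignment cost of installing those large jobs followed by running LS-ECT on the remaining small jobs. Each piece is directly billed to a term in the target expression.

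First I would handle the preprocessing. Following the implementation sketched in the LS-ECT complexity discussion, we pre-compute the cumulative-work values $A_i(t_{i,k})$ for every interval boundary on every machine in a single sweep, using the recurrence $A_i(t_{i,k}) = A_i(t_{i,k-1}) + e_{i,k}(t_{i,k}-t_{i,k-1})$. This costs $O(k_i)$ per machine and $O(\tilde{n})$ in total. Identifying the $d$ largest jobs in step~1 can be done in $O(n)$ time via linear-time selection (e.g.\ median-of-medians followed by a partition). These contributions account for the $O(\tilde{n}+n)$ summand. Since these values depend only on the intervals (not on the job-to-machine assignment), the preprocessing need not be repeated inside the enumeration loop.

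Next I would bound the per-assignment cost. There are $m^d$ ways to distribute the $d$ large jobs among the $m$ machines. For a fixed assignment, one initializes $(P_i,f_i)\gets (0,0)$ for each machine in $O(m)$ time, then installs the $d$ large jobs on their designated machines; computing the completion time of each installed large job reduces to one binary search over its target machine's interval list and costs $O(\log k_i)$. So installing all $d$ large jobs costs $O(d\max_i \log k_i)$, which is dominated by the small-job scheduling cost below. Then one invokes LS-ECT on the $n-d$ small jobs, and by the analysis earlier in the paper each small job costs $O\bigl(m+\sum_{i=1}^{m}\log k_i\bigr)$ (a binary search per machine followed by taking the minimum completion time). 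Tracking the best-so-far makespan adds another $O(m)$ per assignment. Thus each of the $m^d$ candidates is processed in $O\bigl((n-d)(m+\sum_{i=1}^m \log k_i)\bigr)$ time, and summing over all assignments gives the dominant term $O\bigl(m^d (n-d)(m+\sum_{i=1}^m \log k_i)\bigr)$.

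The argument is essentially bookkeeping rather than a deep combinatorial analysis; the only mild subtlety is verifying that installing the $d$ large jobs does not inflate the per-assignment cost beyond the LS-ECT step. That is immediate because installing each large job requires only a single binary search on one interval list, whereas each small-job step already performs $m$ binary searches plus a min. Adding the preprocessing term to the enumeration term yields exactly $O\bigl(\tilde{n}+n+m^{d}(n-d)(m+\sum_{i=1}^m \log k_i)\bigr)$, completing the plan.
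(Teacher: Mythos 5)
Your proposal is correct and follows essentially the same decomposition as the paper's proof: $O(\tilde{n})$ preprocessing of the cumulative values $A_i(t_{i,k})$, $O(n)$ linear-time selection of the $d$ largest jobs, and then $m^d$ enumerations each costing $O((n-d)(m+\sum_{i=1}^m \log k_i))$ for LS-ECT on the small jobs. You even supply one small detail the paper leaves implicit, namely that installing the $d$ large jobs within each enumeration branch is dominated by the small-job scheduling cost.
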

\begin{proof}
%In step 1, we sort the $n$ jobs in LPT order, which can be done in $O( n \log n)$ time. We can compute all $A_i(t_{i,k})$ in $O(\tilde{n})$ time.
In step 1, %$d$ largest jobs can be found in $O(n)$ time.
we first find the $d$-th largest job  using linear selection algorithm, and then extract the $d$ largest jobs  in $O(n)$ time.  In step 2, there are at most $m^d$ ways to assign these large jobs to $m$ machines. For each large job assignment, the algorithm schedules the remaining small jobs using LS-ECT rule. As we described in the Section \ref{ECT-time} for the time complexity analysis, with  $A_i(t_{i,k})$ pre-calculated in $O(\tilde{n})$ time, it takes $O((\sum_{i=1}^{m} \log k_i)+m) $ time to assign a  job to the machine so it completes the earliest.
So  for each large job assignment, the overall time for scheduling $(n-d)$ small jobs using LS-ECT in step 2 is $O((n-d)(m + \sum_{i=1}^m \log k_i))$. Adding all the time,  we get the total running time of Algorithm 1 $ O( \tilde{n} + n +
m^d(n-d) (m + \sum_{i=1}^{m} \log k_i)) $
%= O( \tilde{n} + m^d(n-d) (m + \sum_{i=1}^{m} \log k_i))$.
%
%
%It is easy to see  $U_{i,t}=\sum_{v=1}^{t-1} |I_{i,v}| s_{I_{i,v}}$, which
%characterizes that amount of work to finish in the last $t-1$
%intervals at this machine.
%
%to $O(m^{d}n(1+ \sum_{i=1}^m \log k_i ) $ ( $ m + $ ???  ).
\end{proof}

Given  an instance of $P_m, e_{i,k} \ge e_0 \mid \mid C_{max}$, and a  real number   $ \epsilon \in (0,1)$,  if  we select $d=\ceiling{\tfrac{m} { \epsilon e_0}}$ and apply algorithm 1, then by Lemma~\ref{lemma-algorithm1-error}, we get a schedule $S$ whose makespan is at most  $(1 + \epsilon)C^*_{\max}$. Combining Lemma~\ref{lemma-algorithm1-error} and Lemma~\ref{lemma-algorithm1-time}, we get the following theorem.

\begin{theorem}\label{theorem-algorithm1-PTAS}
For any given instance of  $P_m, e_{i,k} \ge e_0 \mid \mid C_{max}$ and an error parameter $\epsilon$,  $0< \epsilon < 1 $,  Algorithm 1 can  return a schedule  with makespan at most $(1+ \epsilon)C_{max}^*$  in $O( \tilde{n}+n +m^{{m}/({\epsilon \cdot e_0})} (n- \tfrac{m} {\epsilon \cdot e_0}) (m+ \sum_{i=1}^{m} \log k_i) )$ time.
\end{theorem}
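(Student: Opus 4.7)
The plan is to obtain the theorem as a direct corollary of the two lemmas that immediately precede it, by making a concrete choice of the parameter $d$ that makes Lemma~\ref{lemma-algorithm1-error} deliver an approximation ratio of $1+\epsilon$ and then reading off the running time from Lemma~\ref{lemma-algorithm1-time}. In particular, I would set $d = \ceiling{\tfrac{m}{\epsilon \cdot e_0}}$ and then simply substitute this value into the two lemma statements.

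First, I would verify the approximation guarantee. By Lemma~\ref{lemma-algorithm1-error}, Algorithm~1 returns a schedule whose makespan is at most $\bigl(1 + \tfrac{m}{d \cdot e_0}\bigr) C_{\max}^*$. Using $d \ge \tfrac{m}{\epsilon \cdot e_0}$, I get
\[
1 + \frac{m}{d \cdot e_0} \;\le\; 1 + \frac{m}{(m/(\epsilon \cdot e_0)) \cdot e_0} \;=\; 1 + \epsilon,
\]
which gives the desired $(1+\epsilon)$-approximation. Note that for this to make sense we implicitly need $n \ge d$; if $n < d$, we can simply enumerate all $m^n$ assignments of all jobs (this is bounded by $m^d$), so no loss occurs, and I would mention this briefly to keep the argument clean.

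Next, I would read off the running time. By Lemma~\ref{lemma-algorithm1-time}, the total running time is
\[
O\!\left( \tilde{n} + n + m^{d}(n-d)\Bigl(m + \sum_{i=1}^{m} \log k_i\Bigr) \right).
\]
Substituting $d = \ceiling{\tfrac{m}{\epsilon \cdot e_0}}$ yields exactly the claimed bound
\[
O\!\left( \tilde{n} + n + m^{m/(\epsilon \cdot e_0)}\Bigl(n - \tfrac{m}{\epsilon \cdot e_0}\Bigr)\Bigl(m + \sum_{i=1}^{m} \log k_i\Bigr) \right).
\]

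There is no real obstacle here since all the hard work has already been done in Lemma~\ref{lemma-algorithm1-error} and Lemma~\ref{lemma-algorithm1-time}; the only thing to be careful about is that the theorem as written uses $m/(\epsilon \cdot e_0)$ rather than its ceiling in the exponent, so the proof should either absorb the ceiling into the big-$O$ or explicitly note that replacing $\ceiling{m/(\epsilon e_0)}$ by $m/(\epsilon e_0)$ changes $m^d$ by at most a factor of $m$, which is swallowed by the $O(\cdot)$ already present in the final bound.
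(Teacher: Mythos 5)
Your proposal is correct and matches the paper's own argument: the paper likewise obtains the theorem by setting $d=\ceiling{\tfrac{m}{\epsilon\cdot e_0}}$ and combining Lemma~\ref{lemma-algorithm1-error} (for the $(1+\epsilon)$ ratio) with Lemma~\ref{lemma-algorithm1-time} (for the running time). Your additional remarks about the $n<d$ case and absorbing the ceiling into the big-$O$ are harmless refinements of the same reasoning.
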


Next we analyze the performance of Algorithm 1 for the more general problem $P_m, e_{i \le m_1,k} \ge e_0 \mid \mid C_{max}$.
We will show that by choosing $d$ appropriately, we can still get a $(1 + \epsilon)$ approximation.

\begin{theorem}\label{approximation-scheme-thm-linear-sharing-bound-on-one-machine} For any given instance of  $P_m, e_{i \le m_1,k} \ge e_0 \mid \mid C_{max}$ and a parameter $d$, Algorithm~1 returns a schedule with makespan at most $(1+\frac{m(m+m_1 -1)}{d \cdot e_0 \cdot m_1})C_{max}^*$ in $O( \tilde{n} + n+
m^d (n-d) (m+ \sum_{i=1}^m \log k_i) )$ time.  In particular, if $d=\ceiling {\tfrac{m(m+m_1-1)}
{\epsilon \cdot e_0 \cdot m_1}}$, it is a $(1+\epsilon)$-approximation.
\end{theorem}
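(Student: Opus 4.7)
The plan is to combine the $S^*$-matching assignment argument of Lemma~\ref{lemma-algorithm1-error} with the pigeonhole case analysis of Theorem~\ref{theorem-LS-ECT-general-m1}. Let $S^*$ be an optimal schedule and let $S'$ be the schedule obtained from step~2 of Algorithm~1 under the assignment of large jobs that matches $S^*$, with large jobs placed at the beginning of their assigned machines and the remaining small jobs then scheduled by LS-ECT. As in Lemma~\ref{lemma-algorithm1-error}, an averaging bound gives $p_d \le \tfrac{m}{d} C^*_{\max}$, so every small job has processing time at most $\tfrac{m}{d}C^*_{\max}$. Since Algorithm~1 returns the schedule of minimum makespan over all enumerated assignments, it suffices to bound $C_{\max}(S')$; assume $C_{\max}(S') > C^*_{\max}$ and let $j$ be a job realizing the makespan.

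First I will show that every job completing after $C^*_{\max}$ in $S'$ (and in particular $j$) must be a small job. This is the payoff from the ``large jobs first'' placement: because the large-job assignment in $S'$ matches $S^*$ and no small job precedes any large job in $S'$, each large job finishes in $S'$ no later than it does in $S^*$, hence by $C^*_{\max}$. The case analysis of Theorem~\ref{theorem-LS-ECT-general-m1} now applies to $j$, but with every late job bounded by $p_d$ rather than by $p_{\max}$. If some $M_i$ with $i \le m_1$ has $f_i \le C^*_{\max}$, then Fact~2 and $e_{i,k}\ge e_0$ yield $C_j \le C^*_{\max} + p_j/e_0 \le (1+\tfrac{m}{d\,e_0}) C^*_{\max}$. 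Otherwise $f_i > C^*_{\max}$ for every $i \le m_1$, and Claim~\ref{claim:number-of-small-jobs} (whose rescheduling argument still works because every late job is an LS-ECT-scheduled small job) gives at most $m-1$ late jobs on the first $m_1$ machines. Pigeonhole produces some $M_i$ with $i \le m_1$ carrying at most $\floor{(m-1)/m_1}$ late jobs, each of size at most $p_d$; rescheduling $j$ to $M_i$ yields
\[
C_j \le C^*_{\max} + \left(\floor{\tfrac{m-1}{m_1}} + 1\right)\tfrac{p_d}{e_0} \le \left(1 + \tfrac{m+m_1-1}{m_1}\cdot \tfrac{m}{d\,e_0}\right) C^*_{\max},
\]
using $\floor{(m-1)/m_1}+1 \le (m+m_1-1)/m_1$. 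The running-time analysis is identical to Lemma~\ref{lemma-algorithm1-time} since the enumeration and the LS-ECT subroutine are unchanged, and choosing $d = \ceiling{m(m+m_1-1)/(\epsilon\,e_0\,m_1)}$ forces the ratio below $1+\epsilon$.

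The main obstacle I expect is precisely the claim that every late job in $S'$ is small. Without it, the delay contributed by late jobs in the pigeonhole step could scale with $p_{\max}$ instead of $p_d$, destroying the $1/d$ control that underpins the PTAS. The fix is the placement rule: loading all large jobs from time~$0$ on their $S^*$-assigned machines before running LS-ECT on the small jobs guarantees every large job completes at least as early in $S'$ as in $S^*$, so the only way $C_{\max}(S')$ can exceed $C^*_{\max}$ is through a small job.
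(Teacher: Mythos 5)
Your proposal is correct and follows essentially the same route as the paper's proof: consider the enumerated schedule $S'$ whose large-job assignment matches $S^*$, bound $p_d \le \tfrac{m}{d}C^*_{\max}$, and then split into the two cases (some $M_i$ with $i\le m_1$ finishing by $C^*_{\max}$ versus the pigeonhole argument via Claim~\ref{claim:number-of-small-jobs} and Fact~2). Your explicit justification that every job finishing after $C^*_{\max}$ in $S'$ is small — needed so the pigeonhole step charges $p_d$ rather than $p_{\max}$ — is a point the paper leaves implicit, but it is the same argument, not a different one.
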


\begin{proof}
As in the proof of Lemma~\ref{lemma-algorithm1-error}, we consider the schedule $S'$ from step~2 of Algorithm~1 that has the same assignment of large jobs as the optimal schedule $S^*$. Let $j$ be the job such that $C_j(S') = C_{\max}(S')$. Without loss of generality, we assume $C_{\max}(S')>C_{\max}(S^*)$, then $j$ must be a small job.
Let $p_{d}$ be the processing time of the $d$-th largest job. Then we must have $C^*_{\max} \ge \tfrac{d \cdot p_{d}}{m}$, that is, $p_d \le \tfrac{m}{d} C^*_{\max}$.
By Fact 1, there is at least one machine in $S'$ where the last job finishes at or before $C_{\max}^*$. If there exists one such machine $M_i$ with $i \le m_1$, we can use similar argument as that of Lemma~\ref{lemma-algorithm1-error},  to show that
\begin{eqnarray*}
C_{\max}(S') = C_j & \le  & C^*_{\max} + \frac{p_j } { e_0} \\
&\le& \left(1+\frac{m}{d \cdot e_0 }\right) \cdot C^*_{\max}
.
\end{eqnarray*}

Otherwise, for all machines $M_i$, $1 \le i \le m_1$, the last job finishes after $C_{\max}^*$. By Claim~\ref{claim:number-of-small-jobs}, there are at most $m-1$ jobs that finish after $C_{\max}^*$ on these $m_1$ machines and there must exist one machine where at most $\floor{\frac{m-1}{m_1}}$ jobs finish after $C_{\max}^*$. Similarly, by Fact 2, moving job $j$ to this machine does not decrease its completion time.
 %there are at most $(m-1)$ jobs that finish after  $C_{\max}^*$  on all of the $m_1$ machines in the schedule $S'$.
%Suppose not, then there are at least $m$ jobs that finish after $C_{\max}^*$.
%   we use LPT-ECT scheduled these jobs, if we move one of these  jobs to a   machine whose last job finish before $C_{\max}^*$,  then its new completion time must be greater than its current completion time, thus greater than $C_{\max}^*$.  there are  at most $(m-m_1)$ machine where the last job finish at or before $C_{\max}^*$, but there are at least $m$ jobs that finish after $C_{\max}^*$ on the first $m_1$ machines,  we can reschedule some of these jobs to those machines, so that the last job of all machines completes after $C_{\max}^*$,
  %
%Also there exists one of the first $m_1$ machines on which there are at most $\floor{(m-1)/m_1}$ jobs completes after $C_{\max}^*$, let it be $M_k$, $k \le m_1$. Then the completion time of last job on $M_k$ is at most $C_{\max}^* + \tfrac{m-1}/{m_1}  \tfrac{p_d }{ e_0}$.
%
%\smallskip
%, we bound  $C_{\max}(S')$. Similarly as before,   let $p_{d}$ be the processing time of the $d$-th largest job. Then  we must have $C^*_{\max} \ge \tfrac{d}{m} p_{d}$. Thus, $p_d \le \tfrac{m}{d} C^*_{\max}$.
%In addition, in $S'$, if we move the  last job  on any  machine  to $M_k$,  its new completion time will be increased to at most $C_{\max}^* + (\tfrac{m-1}{m_1} +1 )\tfrac{p_d }{ e_0}$.
Therefore,
\begin{eqnarray*}
C_{\max}(S') = C_j(S') & \le  & C^*_{\max} +\left(\floor{\frac{m-1}{m_1}} +1 \right) \frac {p_d} { e_0} \\
& \le  & C^*_{\max} + \frac{m+m_1-1}{ m_1}  \cdot \frac{m}{d} C^*_{\max} \cdot \frac{1}{e_0}\\
%& \le & C^*_{\max} + \frac{(1+e_0)}{ e_0^2} \cdot \tfrac{m}{d} C^*_{\max} \\
&\le& \left(1+\frac{m(m+m_1-1)}{d \cdot e_0 \cdot m_1 }\right) \cdot C^*_{\max}.
\end{eqnarray*}

 Algorithm~1 returns a schedule $S$ that is at least as good as $S'$, so the above bound is also an upper bound of $C_{\max}(S)$.
Let $\epsilon$ be a real number in $(0,1)$,  if   we select $d=\ceiling{\tfrac {m(m+m_1-1)} { \epsilon \cdot e_0 \cdot m_1}}$, then $C_{\max}(S) \le (1 + \epsilon)C^*_{\max}$.

Finally, the analysis of running time remains the same as in Lemma~\ref{lemma-algorithm1-time}.
%Now we consider the running time of Algorithm~1.  We  first sort the primary jobs in $O( n \log n)$ time. Then it is easy to find the first $d$ largest jobs.
%there are at most $m^d$ ways to  assign these large jobs to $m$ machines.
%For each of these partial schedules, scheduling the remaining jobs in LPT-ECT takes at most $O(\tilde{n} n)$ time. ???
\end{proof}

\section{Total Completion Time Minimization}

\subsection{Hardness of Approximation}
%{Complexity of the Problem}

It is known that the classical problem $P_m || \sum C_j$  can be solved using SPT and it  becomes inapproximable when the machines have  unavailable periods, that is, the sharing ratio is in $\{0, 1\}$. In this section, we show that, the problem
$P_m, e_{i,k} \mid \mid \sum C_j$
does not  admit any approximation algorithm even if there are only two machines and the sharing ratio is always positive. %$e_{i,k} >0$. %, which has the arbitrary sharing ratios on all machines, % is  not only NP-hard, but also

\begin{theorem}\label{thm:sum-cj-inapproximation}Let $f(n): \mathbb{N}\rightarrow \mathbb{N}$ be an arbitrary function such that $f(n)>1$.
There is no  polynomial time $f(n)$-approximation algorithm for $P_m, e_{i,k}\mid \mid \sum C_j$ unless $P=NP$.
\end{theorem}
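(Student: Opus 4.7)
The plan is to reduce from Partition, following the blueprint of Theorem~\ref{theorem-inapprox} but tuning the parameters so that the \emph{sum} of completion times (rather than the makespan) is blown up by more than a factor of $f(n)$ in the ``no partition'' case.

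Given a Partition instance $\{a_1,\ldots,a_n\}$ with even total $A=\sum_{i=1}^n a_i$, I would construct a scheduling instance with $m=2$ machines and $n$ primary jobs with $p_j=a_j$. On each machine I place a single routine job occupying the interval $(A/2,\,A/2+T]$ with sharing ratio $e$, where I set $e=\tfrac{1}{f(n)\cdot n\cdot A}$ and $T=A/e$ (so that the low-capacity window is wide enough to absorb any overflow of up to $A/2$ units of primary work).

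The analysis has two cases. \emph{If a partition exists}, scheduling one half of the partition on each machine makes every primary job finish by time $A/2$, so $\sum C_j\le n\cdot A/2$. \emph{If no partition exists}, the two machine loads $P_1,P_2$ are positive integers summing to the even number $A$ and cannot both equal $A/2$, hence $\max(P_1,P_2)\ge A/2+1$. On the overloaded machine any processing beyond time $A/2$ proceeds at rate $e$, so the last job there completes no earlier than $A/2+(P-A/2)/e\ge A/2+1/e=A/2+f(n)\cdot n\cdot A$. Thus $\sum C_j\ge A/2+f(n)\cdot n\cdot A$ in \emph{every} schedule of the constructed instance.

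Consequently, the optimal $\sum C_j$ exceeds $f(n)\cdot nA/2$ whenever no partition exists, while it is at most $nA/2$ whenever a partition exists, a ratio gap of more than $f(n)$. An $f(n)$-approximation algorithm would therefore output a schedule with $\sum C_j\le f(n)\cdot nA/2$ if and only if the Partition instance is solvable, giving a polynomial-time decision procedure for Partition. The two points requiring any real care are (i) the integrality step $\max(P_1,P_2)\ge A/2+1$, which drives the blowup, and (ii) confirming that $e$ and $T$ admit polynomial-size encodings (which holds whenever $f(n)$ is polynomial-time computable, the regime where the statement has content). Everything else is parallel to the makespan argument already carried out in Theorem~\ref{theorem-inapprox}, so I do not anticipate a serious obstacle.
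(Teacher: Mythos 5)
Your proposal is correct and follows essentially the same reduction as the paper: both construct two machines with full capacity up to $A/2$ followed by a window of sharing ratio $\Theta(1/(n f(n) A))$, so that any job spilling past $A/2$ forces $\sum C_j$ above $f(n)\cdot nA/2$, while a valid partition keeps it at most $nA/2$. The only differences are cosmetic (a finite low-rate window of length $A/e$ instead of an unbounded one, and your explicit integrality step $\max(P_1,P_2)\ge A/2+1$, which the paper leaves implicit).
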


\begin{proof}
We reduce from the  partitioned problem.
In the Partition Problem, we are given a set of positive integers $\{a_1, a_2, \cdots, a_n\}$, where $A = \sum_{i = 1}^{n} a_i$. The problem is ``can the set be partitioned into two subsets with equal sum $\tfrac{A}{2}$?''
 We construct an instance of $P_m, e_{i,k} \mid \mid \sum C_j$ as follows:
There are $n$ jobs, 1, 2, \ldots, n, job $j$ has processing time $p_j = a_j$.
There are 2 machines, each having the intervals $(0, \tfrac{A}{2}]$ and $(\tfrac{A}{2}, +\infty)$ with the sharing ratios of $e_{i,1} =1 $ and $e_{i,2} = \tfrac{1}{nf(n)\cdot A}$, $i = 1, 2$, respectively.

It is easy to see that there is a partition of the set if and only if there is a schedule where all jobs could finish before $\tfrac{A}{2}$, i.e.  the total completion time of all jobs is at most $\tfrac{nA}{2}$.
We can show further, the latter problem can be answered if there is a $f(n)$-approximation algorithm.

Suppose there is a schedule in which the total completion time of all jobs is at most $\tfrac{nA}{2}$, then a $f(n)$-approximation algorithm would return a schedule with the total completion time at most $\tfrac{nA}{2}\cdot f(n)$. This implies that all jobs must finish before $\tfrac{A}{2}$ because if a job finishes after $\tfrac{A}{2}$, its completion time will be at least $ \tfrac{A}{2} + \tfrac{1}{e_{i,2}} = \tfrac{A}{2}+nA \cdot f(n)$, and thus the total completion time is greater than $\tfrac{n A}{2} \cdot f(n)$. %So the returned   schedule by the algorithm must have total completion time at most $\tfrac{nS}{2}$.
Hence, there exists a schedule whose total completion time at most $\tfrac{nA}{2}$ if and only if the $f(n)$-approximation algorithm returns a schedule such that all jobs finish before $\tfrac{A}{2}$. Consequently, we can solve the partition problem, which is impossible unless  $P=NP$.
\end{proof}

Given the inapproximability result from Theorem~\ref{thm:sum-cj-inapproximation}, from now on, we will focus on the problems such that the sharing ratios on some machines are greater than or equal to  a  constant  $e_0$.

\subsection{Approximation Algorithms}

In this section, we study the problem when there exist one or more machines such that the sharing ratios for all intervals on these machines have a constant lower bound, that is, $P_m, e_{i\le m_1,k} \ge e_0 \mid \mid \sum C_j$. %One can easily prove the NP-hardness of this problem by reducing the problem of $P||C_{max}$ to it. Hence, our focus is to develop the approximation algorithms for this problem.
We first analyze the performance of SPT
%(Shortest Processing Time )
and its variant SPT-ECT
% (Shortest processing Time - Earliest Completion Time)
for our problem,
%both of which generate an optimal schedule  for the classical problem $P_m || \sum C_j$.
and then we develop a PTAS.

\subsubsection{SPT and SPT-ECT}

%SPT rule schedules the next shortest job in the list to the earliest available machine. If two machines $M_i$ and $M_j$ are both available at the same time, we assume the machine with smaller index will be used.
It is well known that for the classical problem $P_m || \sum C_j$, SPT generates an optimal schedule where the jobs complete in SPT order.
%,  SPT generates the same optimal schedule where the jobs complete in SPT order.
Its variant, SPT-ECT rule, which schedules the next shortest job to the machine so it completes the earliest, generates the same optimal schedule as SPT.

Now we consider SPT and SPT-ECT rules for our problem $P_m,  e_{i\le m_1,k} \ge e_0 \mid \mid \sum C_j$. First of all,   SPT and SPT-ECT may generate difference schedules. Consider two machines where the first machine has sharing ratio 1 during the interval  $(0, 1]$ and $\tfrac{1}{2}$ during  $(1, \infty)$, the second machine has sharing ratio $1$ all the time. There are 3 jobs whose processing times are 1, 2, 2.  SPT will schedule one job  of length 2 on $M_2$, and two other jobs on $M_1$. The total completion time is $1 + 5 + 2 = 8$. SPT-ECT may schedule the first job on $M_1$ and the other two jobs on $M_2$, the total completion time is $1 + 2 + 4 = 7$. Moreover, SPT and SPT-ECT don't dominate each other, i.e. for some cases SPT-ECT generates better schedule (see the above example), while for some other cases, SPT generates better schedule. For the above example, if we add one more job of processing time $3$, SPT will generate a better schedule which schedules jobs with length 1 and 2 on $M_1$ and the other two jobs on $M_2$. The total completion time is $ 1 + 5 + 2 + 5 = 13$. The SPT-ECT, on the other hand, may schedule the jobs with length 1 and 3 on $M_1$ and the other two jobs on $M_2$, the total completion time is then $1 + 2 + 4 + 7 = 14$.

%For example, there are four jobs $J_1$, $J_2$, $J_3$, $J_4$ with the processing times of $1$, $1+\epsilon$, $1+\epsilon$, $1+2\epsilon$ and there are two machines with one processing sharing interval $[1, \infty)$ of sharing ratio $\tfrac{1}{2}$ on the first machines. SPT will schedule jobs $J_1$, $J_3$ on $M_1$ and jobs $J_2$, $J_4$ on $M_2$ with the total completion time of $7+6\epsilon$. SPT-ECT may schedule jobs $J_1$, $J_4$ on $M_1$ and jobs $J_2$, $J_3$ on $M_2$ with the total completion time of $7+7\epsilon$, which is worse than the schedule produced by SPT.

Next we show that the approximation ratio of SPT rule for our problem is unbounded.  Consider two machines where sharing ratio $e_{1,1} =1$ during $[0, \infty)$ and  $e_{2,1} = \tfrac{1}{\alpha}$ during  $[0, \infty)$.
Given an example of two jobs with processing time 1 and 1,
SPT rule schedules two jobs one on each machine with the total completion time of $1+\tfrac{1}{\alpha}$ while in the optimal schedule,  both jobs are on $M_1$ with the total completion time of $2$, which is optimal. The approximation ratio for SPT for this instance is $\tfrac{1+({1}/{\alpha})}{2}$. The ratio approaches infinity when $\alpha$ is close to 0.

Finally we prove that the approximation ratio of SPT-ECT rule for our problem is bounded. For convenience, we first prove the following claim before we give the approximation ratio of SPT-ECT.
\begin{claim}\label{m1-verses-m-SPT}
 Given a set of $n$ jobs and machines with % full availability (that is, the
 sharing ratio always 1, the minimum total completion time of the jobs on $m_1$ identical machines is at most $\ceiling{\tfrac{m}{m_1}}$ times that on $m$, $m \ge m_1$ identical machines. \end{claim}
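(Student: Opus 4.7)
My plan is to reduce the claim to the classical $P_m \| \sum C_j$ problem. Since all sharing ratios are $1$, both sides of the asserted inequality are instances of the classical identical parallel-machines setting in which SPT is optimal, a fact already invoked in the introduction. The proof then reduces to comparing the two optimal SPT values via a closed-form expression for $\sum C_j$ and one elementary rounding inequality.

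First I would sort the jobs so that $p_1 \le p_2 \le \cdots \le p_n$. Round-robin SPT places job $i$ at position $\lceil i/m \rceil$ on machine $((i-1) \bmod m) + 1$, and $p_i$ contributes to the completion time of every job sitting at or after that position on its own machine. Counting these contributions yields the standard identity
\begin{equation*}
\mathrm{OPT}_m \;=\; \sum_{i=1}^{n} \left\lceil \frac{n-i+1}{m} \right\rceil p_i,
\end{equation*}
and the same formula with $m_1$ in place of $m$ for $\mathrm{OPT}_{m_1}$. Setting $k = \lceil m/m_1 \rceil$, the claim will follow coefficient-by-coefficient provided I can establish
\begin{equation*}
\left\lceil \frac{N}{m_1} \right\rceil \;\le\; k \left\lceil \frac{N}{m} \right\rceil
\qquad \text{for every integer } N \ge 1,
\end{equation*}
and then sum this inequality, weighted by $p_i$, across $i=1,\ldots,n$ (with $N = n-i+1$).

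For the rounding step I would note that $k \lceil N/m \rceil$ is a positive integer and is bounded below by $(m/m_1)(N/m) = N/m_1$; consequently it is at least the smallest integer that is $\ge N/m_1$, namely $\lceil N/m_1 \rceil$. This integer-lifting is the only non-bookkeeping ingredient and is where I expect the small amount of care to be needed; everything else is a direct consequence of the optimality of SPT and the closed-form expression above.
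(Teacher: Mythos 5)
Your proof is correct, but it takes a genuinely different route from the paper's. The paper also reduces to the classical setting via SPT optimality, but then argues schedule-to-schedule: assuming first that $m = c\,m_1$ for an integer $c$, it observes that the jobs placed on machine $M_i$ in the $m_1$-machine SPT schedule are spread over $c$ machines in the $m$-machine SPT schedule, and derives the per-job lower bound $C_j(S^*_m) \ge \tfrac{1}{c}C_j(S^*_{m_1})$ by an averaging argument; the non-divisible case is then handled by padding up to the smallest multiple $m' > m$ of $m_1$ and using monotonicity in the number of machines. Your argument instead invokes the closed-form positional-weight expression $\mathrm{OPT}_m = \sum_{i=1}^{n}\lceil (n-i+1)/m\rceil\, p_i$ and reduces everything to the single integer inequality $\lceil N/m_1\rceil \le \lceil m/m_1\rceil\,\lceil N/m\rceil$, which you correctly justify by noting that the right side is a positive integer bounded below by $N/m_1$. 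What your approach buys is uniformity and brevity: there is no case split on divisibility, no auxiliary machine count $m'$, and the only inequality is a coefficient-wise comparison summed against the nonnegative weights $p_i$. What the paper's approach buys is structural insight into how the two SPT schedules relate machine by machine, which is arguably more in the spirit of the scheduling arguments used elsewhere in the paper. Both proofs rest on the same two external facts (SPT optimality for $P_m \mid\mid \sum C_j$ and elementary ceiling arithmetic), so either is acceptable.
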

\begin{proof}
Suppose the jobs are indexed by SPT order. We first consider the case that $m$ is a multiple of $m_1$, i.e.  $\tfrac{m}{m_1} = c$ for some integer $c$.

Let  $S^*_{m_1}$ be the optimal schedule for $m_1$ machines which is generated by SPT.
Then in $S_{m_1}^*$,  the indices of jobs scheduled on $M_i$ will be $i$, $m_1 + i$, $2m_1 + i$, $\ldots$
Let $j = \lambda m_1 + i$ be a job on $M_i$, then  its completion time will be
$$C_j (S^*_{m_1}) = \sum_{ 0 \le k \le \lambda} (p_{ k m_1 + i}).$$

Let  $S^*_{m}$ be the optimal schedule for $m$ machines which is generated by SPT.  It is easy to see that the jobs scheduled on $M_i$, $1 \le i \le m_1$,  in $S_{m_1}^*$ are now scheduled in SPT order on $\tfrac{m}{m_1} = c$ machines:  $M_i$, $M_{m_1 + i}$, $M_{2m_1 + i}$, $\ldots$, $M_{(c - 1)m_1+i}$ in $S_m^*$.   For  job $j = \lambda m_1 + i$, $ 1 \le i \le m_1$, a lower bound for $C_j(S^*_{m})$ can be obtained by assuming that these $c$ machines are processing the jobs $ ( k m_1 + i )$ only, $0 \le k \le \lambda$, between time 0 and $C_j(S^*_{m})$,
$$C_j (S^*_{m}) \ge \tfrac{ \sum_{ k =0} ^{\lambda} {p_{ k m_1 + i}}}{c} = \tfrac{1}{c}    C_j (S^*_{m_1})   = \tfrac{m_1}{m}  C_j (S^*_{m_1}), $$
which means

$$\sum_{j=1}^n C_j(S_{m_1}^*)  \le
\tfrac{m}{m_1} \sum C_j (S^*_{m}).$$

For the case that $m$ is not a multiple of $m_1$,  let $m'> m$ be the smallest multiple of $m_1$. Using the above argument,   the minimum total completion time on $m_1$ machines is at most $\tfrac{m'}{m_1}= \ceiling{\tfrac{m}{m_1}} $ times the minimum total completion time on $m'$ machines; the latter is  a lower bound on the total completion time on $m$ machines,
This completes the proof.
\end{proof}

\noindent Now we give the approximation ratio of SPT-ECT for our problem.

\begin{theorem}\label{m1-e0-verses-m-SPT}
For $P_m, e_{i\ge m_1,k} \ge e_0 || \sum C_j$, SPT-ECT is $\ceiling{\tfrac{m}{m_1}}\tfrac{1}{ e_0}$-approximation with the running time of $O(n\log n + \tilde {n} + n(m  + \sum_{i=1}^m \log k_i))$.
\end{theorem}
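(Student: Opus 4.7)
The plan is to chain three inequalities, exploiting Claim~\ref{m1-verses-m-SPT} together with the facts that our problem is ``harder'' than the classical parallel-machine problem (ratios are all at most 1) and that on the first $m_1$ machines it is ``easier'' than a uniform-$e_0$ version.

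First I would establish the lower bound $\sum_j C_j^*\ge \sum_j C_j^{\mathrm{cl},m}$, where $\sum_j C_j^{\mathrm{cl},m}$ is the optimal total completion time in classical $P_m\mid\mid \sum C_j$. This holds because sharing ratios are at most 1, so any feasible schedule in our problem runs at most as fast as the classical one, and SPT is optimal in the classical case. Then Claim~\ref{m1-verses-m-SPT} immediately gives $\sum_j C_j^{\mathrm{cl},m_1}\le \lceil m/m_1\rceil \sum_j C_j^{\mathrm{cl},m}\le \lceil m/m_1\rceil \sum_j C_j^*$, which is the factor~$\lceil m/m_1\rceil$ in the target bound.

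The main step is to show $\sum_j C_j(\text{SPT-ECT})\le (1/e_0)\sum_j C_j^{\mathrm{cl},m_1}$, which supplies the factor $1/e_0$. I would argue this via a ``phantom'' comparison schedule $\hat S$: assign each job, in SPT order, to machine $M_{((j-1)\bmod m_1)+1}$ among the first $m_1$ machines of our instance, i.e.\ apply the classical $m_1$-machine round-robin SPT assignment to our problem. Because every interval on those first $m_1$ machines has sharing ratio at least $e_0$, the completion time of job $j$ in $\hat S$ is at most $1/e_0$ times its classical $m_1$-machine completion time, so $\sum_j C_j(\hat S)\le (1/e_0)\sum_j C_j^{\mathrm{cl},m_1}$. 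The remaining piece is the comparison $\sum_j C_j(\text{SPT-ECT})\le \sum_j C_j(\hat S)$: for every job~$j$, SPT-ECT chooses the minimum-completion-time machine among all $m$ machines, so in particular its completion is bounded by what it would be on the first-$m_1$ machine $\hat S$ uses, given that SPT-ECT may have diverted earlier jobs to the last $m-m_1$ machines and therefore its first-$m_1$ loads are componentwise no heavier than $\hat S$'s. Chaining all three steps yields $\sum_j C_j(\text{SPT-ECT})\le \lceil m/m_1\rceil (1/e_0)\sum_j C_j^*$. The running time is identical to the LS-ECT analysis already done: $O(n\log n)$ for sorting into SPT order, $O(\tilde n)$ to precompute the cumulative capacities $A_i(t_{i,k})$, and $O(m+\sum_i \log k_i)$ per job to locate the minimum-completion machine by binary search on intervals followed by a minimum across the $m$ machines.

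The main obstacle will be making the comparison $\sum_j C_j(\text{SPT-ECT})\le \sum_j C_j(\hat S)$ rigorous. SPT-ECT is greedy, and a locally optimal choice can produce a very unbalanced first-$m_1$ state that is not pointwise dominated by $\hat S$'s perfectly round-robin loads; in particular the minimum load on the first $m_1$ machines in SPT-ECT can exceed the minimum load in the phantom schedule. I would handle this by an inductive argument on~$j$ that tracks the \emph{total} load on the first $m_1$ machines in the two schedules rather than the per-machine loads, and uses the min-completion-time property of SPT-ECT to show the sum $\sum_j C_j(\text{SPT-ECT})$ never exceeds the corresponding sum for $\hat S$, even when SPT-ECT temporarily routes jobs to the slower last $m-m_1$ machines. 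If this direct comparison turns out to be fragile, a fallback would be to compare SPT-ECT to SPT-ECT restricted to the first $m_1$ machines and then bound the latter by $(1/e_0)\sum_j C_j^{\mathrm{cl},m_1}$ using Observation~\ref{1-over-c-approx}-style scaling.
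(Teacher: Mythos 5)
Your proposal is correct and follows essentially the same route as the paper: lower-bound the optimum by the classical $m$-machine SPT value, invoke Claim~\ref{m1-verses-m-SPT} for the $\ceiling{m/m_1}$ factor, and pick up the $\tfrac{1}{e_0}$ factor by comparing SPT-ECT against a schedule confined to the first $m_1$ bounded-ratio machines. The comparison you flag as the main obstacle is exactly the step the paper also leaves at the level of assertion (it states that SPT-ECT on all $m$ machines is no worse than SPT-ECT restricted to the first $m_1$ machines, which is your stated fallback), so your argument matches the paper's in both structure and level of detail.
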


\begin{proof}
%\noindent Now we are ready to prove the Theorem~\ref{m1-e0-verses-m-SPT}.
%
%\noindent {\bf Proof of Theorem ~\ref{m1-e0-verses-m-SPT}:}
Let $S_m^*$ be the optimal schedule for the jobs on all $m$ machines,
and let $\widetilde{S}_m^*$ %, and $\tilde{S}_{m_1}^*$
be the optimal schedule  of the jobs on $m$ machines without processor sharing. It is obvious that  $ \sum C_j (\widetilde{S}_m^*) \le \sum C_j (S_m^*)$.
Let $\widetilde{S}_{m_1}$ be the schedule obtained by applying SPT-ECT to the $m_1$ machines without processor sharing, which is the same as the schedule obtained by applying SPT.
By Theorem~\ref{m1-verses-m-SPT},
$$\sum_{j=1}^n C_j(\widetilde{S}_{m_1})  \le \
\ceiling{\tfrac{m}{m_1} } \sum  C_j (\widetilde{S}_m^*).$$

Let $S$ be the SPT-ECT schedule of the jobs on all $m$ machines. Let $S_{m_1}$ be the schedule obtained by applying SPT-ECT to the first $m_1$ machines only. Then the total completion time of $S$ is at most that of $S_{m_1}$, which is at most $\tfrac{1}{e_0}$ times that of  $\widetilde{S}_{m_1}$
Thus, we have
 $$ \sum C_j (S) \le \sum C_j (S_{m_1}) \le \tfrac{1}{e_0} \sum C_j(\widetilde{S}_{m_1}) \le \ceiling{\tfrac{m}{m_1}}\tfrac{1}{ e_0}  \sum C_j (\widetilde{S}_m^*) \le \ceiling{\tfrac{m}{m_1}}\tfrac{1}{ e_0}  \sum C_j (S_m^*).$$

 With the same implementation of LPT-ECT for the makespan minimization problem, the total time would be $O(n\log n + \tilde {n} + n(m  + \sum_{i=1}^m \log k_i))$.
%\QED
\end{proof}

\subsection{Approximation Scheme}\label{PTAS1}

In this section, we develop a PTAS for our problem when  the sharing ratios  on $(m-1)$ machines  have a lower bound $e_0$, i.e.,  $P_m,    e_{i\le m-1, k} \ge e_0 | | \sum C_j$.
For convenience, we introduce the following two notations that will be used in this section.
\begin{enumerate}
    \item[]  $P_i(S)$: the total processing time of the jobs assigned to machine $M_i$ in $S$. % use $t_i(S)$ to denote the completion time of the last job on $M_i$,
  \item[] $\sigma_i(S)$: the total completion time of the jobs scheduled to $M_i$ in $S$.
\end{enumerate}
The idea of our algorithm is to schedule the jobs one by one in SPT  order; for each job $j$ to be scheduled, we enumerate all the possible assignments of job $j$ to all machines $M_i$, $1 \le i \le m$, and then we prune  the set of schedules so that no two schedules are  ``similar''. % and we only  keep   a limited number of schedules.
Two schedules  $S_1$ and $S_2$  are ``similar'' with respect to a give parameter $\delta$  if for every $1 \le i \le m$, $P_i(S_1)$ and $P_i(S_2)$ are both in an interval $[(1+\delta)^x, (1 + \delta)^{x+1})$ for some integer $x$, and $\sigma_i(S_1)$ and $\sigma_i(S_2)$ are both in an interval $[(1+\delta)^y, (1 + \delta)^{y+1})$ for some integer $y$. We use $S_1 \overset{\delta}{ \approx}  S_2$ to denote that $S_1$ and $S_2$ are ``similar'' with respect to $\delta$. Our algorithm is formally presented as follows.

\medskip

\noindent{\bf Algorithm2}

\smallskip

\noindent {\bf Input}:
\begin{itemize}
\item $e_0$, $\epsilon$
 \item The intervals   $(0, t_{i,1}]$, $(t_{i,1}, t_{i,2}]$, $\ldots$ on machine $M_i$, $1 \le i \le m$,
          and their sharing ratios $e_{i,1}$, $e_{i,2}$, $\ldots$, respectively.
%\item $m$ machines, each machine $M_i$ is associated with $k_i$ sharing ratios $e_{i, k}$, $k \le k_i$.
For $1 \le i \le m-1$ and $ 1 \le k \le k_i$,  $e_{i, k} \ge e_0$.    %, the first $(m-1)$ machines with processing sharing intervals such that $e_{i,j} \ge e_0$ and the $m$-th machine with the processing sharing intervals of arbitrary sharing ratios
\item $n$ jobs with the processing times, $p_1$, $\cdots$, $p_n$ %in the increasing order
\end{itemize}

\noindent{\bf Output:} A schedule  $S$ whose total completion time is at most $(1 + \epsilon)$ times the optimal.

\noindent{\bf Steps:}

\begin{enumerate}
    \item Reindex the jobs in SPT order
    \item  Let  $\delta = \tfrac{\epsilon \cdot e_0}{6 n}$
    \item Let $U_0 = \emptyset$
    \item For $j=1, \ldots, n$, compute $U_j$ which is a set of  schedules of the first $j$ jobs:
     \item[]  \begin{enumerate}
        \item $U_j = \emptyset$
        \item for each schedule $S_{j-1} \in U_{j-1}$
        \item[] \hspace{0.2in} for $i = 1 \ldots m$
        \item[] \hspace{0.4in} add job j to the end of $M_i$ in $S_{j-1}$, let the schedule be $S_j$
        \item[] \hspace{0.4in} $U_{j} = U_j \cup \{S_j \}$
        \item prune $U_{j}$ by repeating the following until $U_{j}$ can't be reduced
        %\item[] Let $\delta < \tfrac{\epsilon \cdot e_0}{ 6n}$.
        \item[] \hspace{0.2in} if there are two schedules $S_1$ and $S_2$  in $U_j$ such that $S_1  \overset{\delta}{ \approx } S_2$
        \item[] \hspace{0.5in} if $P_m(S_1) \le P_m(S_2)$,
        $U_j = U_j \setminus \{S_2\}$
        \item[] \hspace{0.5in} else   $U_j = U_j \setminus \{S_1\}$
    \end{enumerate}
    \item Return the schedule $S \in U_n$ that minimizes $\sum_{ i = 1}^{m} \sigma_i$
\end{enumerate}

\begin{theorem}\label{approximation-scheme-thm2} Algorithm2  is a $(1+\epsilon)$-approximation scheme for
 $P_m,  e_{i\le m-1,k} \ge e_0 | | \sum C_j$, and it runs in time $O(n \log n+ n (m+\tilde{n}) (\tfrac{36}{\epsilon^2 \cdot e_0^2}(\log P) (\log \tfrac{nP}{e_0})n^2)^m)$, where $P = \sum p_j$.
\end{theorem}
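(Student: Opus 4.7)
The plan is to establish correctness by maintaining an inductive invariant on the surviving schedules $U_j$, and then to bound the running time by counting the buckets that survive pruning.

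Fix an optimal schedule $S^*$; without loss of generality $S^*$ processes each machine's jobs in SPT order, so the re-indexing of step~1 makes the restriction $S^*_j$ to the first $j$ jobs a valid partial schedule consistent with Algorithm~2's ``append to the end of some machine'' construction. The invariant to maintain by induction on $j$ is that $U_j$ contains a schedule $\hat S_j$ satisfying $P_i(\hat S_j) \le (1+\delta)^j P_i(S^*_j)$ and $\sigma_i(\hat S_j) \le \beta_j\, \sigma_i(S^*_j)$ for each machine $M_i$, together with the structural inequality $P_m(\hat S_j) \le P_m(S^*_j)$ that is enforced by the pruning rule favoring smaller $P_m$; here $\beta_j$ is a factor to be bounded. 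For the inductive step, extend $\hat S_{j-1}$ by placing job $j$ on the same machine $M_{i^*}$ that $S^*$ uses: additive extension preserves the multiplicative $P_i$-gap, and the new job's completion time on $M_{i^*}$ inflates by at most a factor $(1+((1+\delta)^{j-1}-1)/e_0)$ when $i^*\le m-1$ (by the $(1/e_0)$-Lipschitz property of $A_{i^*}^{-1}$ on intervals with sharing ratio at least $e_0$, together with $A_{i^*}^{-1}(P_{i^*}) \ge P_{i^*}$), and by no factor at all when $i^* = m$ (by monotonicity of $A_m^{-1}$ applied to the structural $P_m$-inequality). The pruning step may then swap this candidate with some $S'$ in the same $(P_i,\sigma_i)$-bucket and with no larger $P_m$, costing one further factor $(1+\delta)$ on both multiplicative gaps while keeping $P_m(S') \le P_m(S^*_j)$.

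Unrolling the recursion for $\beta_j$ yields $\beta_n = O((1+\delta)^n\,(1+((1+\delta)^n-1)/e_0))$; choosing $\delta = \epsilon e_0/(6n)$ makes both $(1+\delta)^n$ and $1+((1+\delta)^n-1)/e_0$ at most $1+\epsilon/3$, so $\beta_n \le 1+\epsilon$ and $\sum_i \sigma_i(\hat S_n) \le (1+\epsilon)\sum_j C_j(S^*)$. Since Algorithm~2 returns the element of $U_n$ minimizing $\sum_i \sigma_i$, the output enjoys the same bound.

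For the running time, only one schedule per bucket tuple $(x_1,y_1,\ldots,x_m,y_m)$ survives pruning. The bounds $P_i \le P$ and $\sigma_i \le n\cdot(P/e_0) = nP/e_0$ leave $O(\delta^{-2}(\log P)(\log(nP/e_0)))$ non-empty pairs per machine, so $|U_j| = O(\delta^{-2m}(\log P)^m(\log(nP/e_0))^m)$. Each of the $n$ iterations creates $m|U_{j-1}|$ candidates, each evaluable in $O(m+\tilde n)$ time via binary search in the pre-computed profile $A_i(t_{i,k})$ (one-time $O(\tilde n)$ pre-computation, as in Section~\ref{ECT-time}), and the initial SPT sort of step~1 costs $O(n\log n)$. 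Substituting $\delta = \epsilon e_0/(6n)$ gives the stated bound $O(n\log n + n(m+\tilde n)(36n^2(\log P)(\log(nP/e_0))/(\epsilon^2 e_0^2))^m)$. The main obstacle will be making the pruning lemma rigorous: a $(1+\delta)$-slack in $P_i$ after a single prune must be shown to remain a bounded multiplicative perturbation across all subsequent insertions through the piecewise-linear $A_i$ with varying slopes, and the unconstrained machine $M_m$ --- on which no Lipschitz bound for $A_m^{-1}$ is available --- has to be handled separately by exploiting that the pruning rule structurally preserves $P_m(\hat S_j) \le P_m(S^*_j)$ throughout the induction.
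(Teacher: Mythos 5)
Your proposal is correct and follows essentially the same route as the paper's proof: the same three-part inductive invariant on $U_j$ (multiplicative $(1+\delta)^j$ slack on $P_i$ and $\sigma_i$ for $i\le m-1$, exact domination $P_m(\hat S_j)\le P_m(S^*_j)$ for the unconstrained machine), the same translation of load slack into completion-time slack via the $1/e_0$-Lipschitz inverse of the capacity profile, the same extra $(1+\delta)$ factor charged to pruning, the same choice $\delta=\epsilon e_0/(6n)$, and the same bucket-counting time bound. The "main obstacle" you flag at the end is in fact already discharged by the Lipschitz argument you invoke, exactly as in the paper.
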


\begin{proof}
Let $S^*$ be the optimal schedule. %By lemma~\ref{SPT-on-each-machine-claim}, the jobs on each machine are scheduled in SPT order.
We use $S^*_j$ to denote the partial schedule of the first $j$ jobs in $S^*$.
We first prove by induction the following claim:
%properties of the partial schedules in $U_j$, $0 \le j \le U_j$.

\medskip
%{\noindent \bf Claim:}
For each job $1 \le j \le n$,   there is a partial schedule $S_j \in U_j$ such that
\begin{enumerate}[topsep=0pt,itemsep=-1ex,partopsep=1ex,parsep=1ex]
    \item[] Property (1): $  P_m(S_j)  \le P_m(S^*_j)$,
    \item[]Property (2)   $P_i(S_j)  \le (1 + \delta)^j P_i(S^*_j)$ for $1 \le i \le m-1$, and
    \item[] Property (3)
    % $t_i(S_j)  \le (1 + \tfrac{ \delta}{e_0} ) t_i(S^*_j)$ for $1 \le i \le m$, and
    %\item[(4)]
     $\sigma_i(S_j) \le (1 + \delta)^j (1+ \tfrac{ 2 n \delta}{e_0}) \sigma_i(S^*_j)$ for $1 \le i \le m$.
\end{enumerate}
 \medskip
  It is trivial for $j=1$.
Assume the hypothesis is true for $j$, so we have a schedule $S_j \in U_j$ with properties (1)-(3).  Consider the schedule of job $j+1$ in $S^*$.

{\noindent \bf Case 1.} In $S^*$, job $j+1$ is scheduled on $M_m$. Then  $P_m(S^*_{j+1}) = P_m(S^*_{j}) + p_{j+1} $. Let $S_{j+1}$ be the schedule obtained from $S_j$ by scheduling job $(j+1)$ on $M_m$.   Then we have
\setcounter{equation}{3}
\begin{eqnarray}
P_m(S_{j+1}) &  = & P_m(S_{j}) + p_{j+1} \nonumber \\ & \le &  P_m(S^*_{j}) + p_{j+1}  \hspace{0.3in} \text { by Property (1)} \nonumber\\
& = &  P_m(S^*_{j+1}) \enspace.
\end{eqnarray}
For $ 1 \le i \le m-1$, $S_j$ and $S_{j+1}$ are the same on $M_i$, so are schedules $S^*_{j}$ and $S^*_{j+1}$. Thus, we have
\begin{equation} P_i (S_{j+1}) = P_i (S_j) \le (1 + \delta)^j P_i (S^*_{j})  =   (1 + \delta) ^j P_i (S^*_{j+1})\enspace, \end{equation}
and for $1 \le i \le m-1$,
\begin{equation} \sigma_i (S_{j+1}) = \sigma_i (S_j) \le (1 + \delta)^j(1 + \tfrac{2 n \delta}{e_0}) \sigma_i (S^*_{j}) = (1 +   \delta )^j (1 + \tfrac{2 n \delta}{e_0}) \sigma_i (S^*_{j+1}).\end{equation}
For $i = m$, note that (4) implies    $C_{j+1}(S_{j+1}) \le C_{j+1}(S^*_{j+1})$,
\begin{eqnarray}
 \sigma_m(S_{j+1}) & = &  \sigma_m(S_{j}) + C_{j+1}(S_{j+1}) \nonumber \\
 & \le &  (1 + \delta)^j(1 + \tfrac{2n \delta}{e_0})\sigma_m(S^*_{j}) + C_{j+1}(S_{j+1} )  \hspace{0.2in} \text { by Property (3)}  \nonumber \\
 & \le &  (1 + \delta)^j(1 + \tfrac{2n \delta}{e_0})\sigma_m(S^*_{j}) + C_{j+1}(S^*_{j+1} )  \hspace{0.2in}    \nonumber\\
 & \le & (1 + \delta)^j(1 + \tfrac{2n \delta}{e_0})\left(\sigma_m(S^*_{j}) + C_{j+1}(S^*_{j+1})\right) \nonumber \\
 & =  & (1 + \delta)^j(1 + \tfrac{2 n \delta}{e_0})\sigma_m(S^*_{j+1}) \enspace.
\end{eqnarray}
If $S_{j+1}$ is not pruned, i.e. $S_{j+1} \in U_{j+1}$,  inequalities (4)-(7) mean $S_{j+1}$ is the schedule in $U_{j+1}$ with Properties (1)-(3). Otherwise, there must exist another schedule  $S'_{j+1} \in U_{j+1}$ such that $S'_{j+1} \overset{\delta}{\approx} S_{j+1}$ which implies
 $$P_m(S'_{j+1}) \le  P_m (S_{j+1}),$$
 $$P_i(S'_{j+1}) \le (1 + \delta) P_i (S_{j+1}) \text{ for } 1 \le i \le m-1 , \text{ and }$$
 $$\sigma_i(S'_{j+1}) \le (1 + \delta) \sigma_i (S_{j+1})  \text{ for } 1 \le i \le m \enspace. $$

Combining with the above inequalities (4)-(7), we get:
   $$P_m(S'_{j+1}) \le  P_m (S^*_{j+1}),$$
 $$P_i(S'_{j+1}) \le (1 + \delta)^{j+1} P_i (S^*_{j+1}) \text{ for } 1 \le i \le m-1 , \text{ and }$$
 $$\sigma_i(S'_{j+1}) \le (1 + \delta)^{j+1}(1+\tfrac{2 n \delta}{e_0} \sigma_i (S_{j+1}))  \text{ for } 1 \le i \le m \enspace. $$
 Thus,  $S'_{j+1}$ is the schedule in $U_{j+1}$ with Properties (1) - (3) in this case.

{ \noindent \bf Case 2.} In $S^*$, $j+1$ is scheduled on  machine $M_k$, $1 \le k \le m-1$. Let $S_{j+1}$ be the schedule obtained from $S_j$ by scheduling job $(j+1)$ on $M_k$.
Then for $M_k$ in $S_{j+1}$, we have
\begin{eqnarray*}
P_k(S_{j+1}) & = & P_k(S_{j}) + p_{j+1}\\
             & \le &  (1 + \delta)^{j} P_k (S^*_{j}) + p_{j+1} \\
 & \le &  (1 + \delta)^{j} ( P_k (S^*_{j}) + p_{j+1} ) \\
 & = &  (1 + \delta)^{j} P_k (S^*_{j+1}).
\end{eqnarray*}
Thus, ${ P_k(S_{j+1})- P_k(S^*_{j+1})}\le  (( 1+ \delta)^j -1 ) P_k(S^*_{j+1})$. Since $(j+1)$ is scheduled on $M_k$ in both $S_{j+1}$ and $S^*_{j+1}$, the difference of the completion time of $j+1$ in these schedules will be
$$C_{j+1}(S_{j+1}) - C_{j+1}(S^*_{j+1}) \le \tfrac{P_k(S_{j+1})- P_k(S^*_{j+1})} {e_0} \le  \tfrac {(( 1+ \delta)^j -1 ) P_k(S^*_{j+1})}{e_0}.$$
Obviously $P_k(S^*_{j+1}) \le C_{j+1}(S^*_{j+1})$; and  we can show that
$$( 1+ \delta)^j -1  \le ( 1+ \delta)^n -1 =
\sum_{i=1}^n \binom{n}{i} {\delta}^i
\le  n\delta ( \sum_{i=1}^n{\frac{1}{ i!}})  \le 2n\delta,$$
thus $C_{j+1}(S_{j+1}) - C_{j+1}(S^*_{j+1}) \le  \tfrac{ 2n \delta}{e_0} C_{j+1}(S^*_{j+1})$. Therefore,
\begin{eqnarray*}
\sigma_k(S_{j+1}) & = & \sigma_k(S_{j}) + C_{j+1}(S_{j+1})\\
             & \le &  (1 + \delta)^{j}\left(1 + \tfrac{2n\delta}{e_0} \right)\sigma_k (S^*_{j}) + C_{j+1}(S_{j+1}) \\
%  & \le &  (1 + \delta)^{j}(1 + \tfrac{2 n \delta}{e_0} )\sigma_k (S^*_{j}) + C_{j+1}(S^*_{j+1}) +  \tfrac{ P_k(S_{j+1})- P_k(S^*_{j+1})}{e_0} \\
%  & < &  (1 + \delta)^{j}(1 + \tfrac{2 n \delta}{e_0} )\sigma_k (S^*_{j}) + C_{j+1}(S^*_{j+1}) +  \tfrac{ (( 1+ \delta)^j -1 ) P_k(S^*_{j+1})}{e_0} \\
% & < &  (1 + \delta)^{j}(1 + \tfrac{2 n \delta}{e_0} )\sigma_k (S^*_{j}) + C_{j+1}(S^*_{j+1}) +  \tfrac{ (( 1+ \delta)^j -1 ) C^*_{j+1}(S^*_{j+1})}{e_0}\\
% & < &  (1 + \delta)^{j}(1 + \tfrac{ 2 n \delta}{e_0} )\sigma_k (S^*_{j}) + (1  +  \tfrac{ ( 1+ \delta)^j -1 } {e_0})C_{j+1}(S^*_{j+1}) \\
& < &  (1 + \delta)^{j}\left(1 + \tfrac{ 2 n \delta}{e_0} \right)\sigma_k (S^*_{j}) + \left(1  +  \tfrac{ 2 n  \delta } {e_0}\right)C_{j+1}(S^*_{j+1}) \\
& < &  (1 + \delta)^{j}\left(1 + \tfrac{ 2 n \delta}{e_0} \right)(\sigma_k (S^*_{j}) + C_{j+1}(S^*_{j+1})) \\
& = &  (1 + \delta)^{j}\left(1 + \tfrac{ 2 n \delta}{e_0} \right)\sigma_k (S^*_{j+1}).
 \end{eqnarray*}
Since for other machines, the schedules $S_j$ and $S_{j+1}$ are same and so are the schedules $S^*_{j}$ and $S^*_{j+1}$,  we have
$$P_m(S_{j+1} ) < P_m(S^*_{j+1})$$
$$P_i(S_{j+1} ) <(1  + \delta) ^j P_i(S^*_{j+1}), \text { for } i \neq m, k,  \text { and } $$
$$\sigma_i(S_{j+1}) <(1  + \delta) ^j (1 + 2 n \delta) \sigma_i(S^*_{j+1} ), \text { for } 1 \le i \le m$$
Same as Case1,  we can show the properties (1) - (3)  hold no matter  $S_{j+1} \in U_{j+1}$ or not.

\bigskip
Thus at the end of the algorithm, after all $n$ jobs have been processed,
\begin{eqnarray*}
\sum_{j=1}^n C_j (S) & = & \sum_{i=1}^m \sigma_i (S)  \\
& \le & (1 + \delta)^n \left(1 + \tfrac{2 n \delta} {e_0}\right)  \sum_{i=1}^m \sigma_i (S^*) \\
& = & (1 + \delta)^n \left(1 + \tfrac{2 n \delta} {e_0}\right)  \sum_{j=1}^n C_j (S^*) \\
& \le  & (1 + 2 n \delta) \left(1 + \tfrac{2 n \delta} {e_0}\right)  \sum_{j=1}^n C_j (S^*) \\
& \le  &  \left (1 + \tfrac{2 n \delta} {e_0}\right)^2  \sum_{j=1}^n C_j (S^*)
\end{eqnarray*}
By step 2 of Algorithm2, $\delta = \tfrac{ \epsilon e_0}{ 6 n} $, then
$$ \sum_{j=1}^n C_j (S) \le \left (1 + \tfrac{2 n \delta} {e_0}\right)^2  \sum_{j=1}^n C_j (S^*)
 \le  \left(1 + \tfrac{ \epsilon}{3}\right)^2  \sum_{j=1}^n C_j (S^*) \le  \left(1 + \epsilon\right)\sum_{j=1}^n C_j (S^*).
$$

Now we consider the running time. Let $P = \sum p_j$, then  after $U_j$ is pruned,   for any schedule $S \in U_j$ and for any machine $M_i$, $1 \le i \le m$, $P_i(S)$ can take at most $\log_{1+\delta} P$ values; $\sigma_i(S)$ can take at most  $\log_{1+\delta}(\tfrac{ nP}{e_0})$ values.  Thus after pruning, there are at most $(\log_{1+\delta} P\log_{1+\delta}\tfrac{ nP}{e_0})^m$ schedules in $U_j$. In each iteration when job $j$ is added, at most $\tilde{n}$ sharing intervals are considered for all machines, so the total time for Step 4 of Algorithm2 will be $O(n(m+\tilde{n})(\log_{1+\delta} P\log_{1+\delta}\tfrac{ nP}{e_0})^m )$.
Plugging in $ \delta = \tfrac{\epsilon \cdot e_0}{6 n}$ and adding the sorting time to get the jobs in SPT order, the total time is $O(n \log n+ n (m+\tilde{n}) (\tfrac{36}{\epsilon^2 \cdot e_0^2}(\log P) (\log \tfrac{nP}{e_0})n^2)^m)$.
\end{proof}

\section{Conclusions}

In this paper we studied the problem of multitasking scheduling with shared processing in the parallel machine environment and with different fractions of machine capacity assigned to different routine jobs. The objectives are minimizing makespan and minimizing the total completion time. %This research is motivated by the work of Hall et. al. in \cite{hll16}, but a generalization from their single machine shared processing multitasking scheduling problem with equal sharing ratio.
For both criteria, we proved the inapproximability for the problem with arbitrary sharing ratios for all machines. Then we focused on the problem where some machines have a positive constant lower bound for the sharing ratios. %For this problem we analyze the performance of the popular scheduling rules including list scheduling, modified list scheduling (LS-ECT), LPT, and LPT-ECT
For makespan minimization problem, we analyzed the approximation ratios of some LS based algorithms and developed an approximation scheme that runs in linear time when the number of machines is a constant. %Most importantly, the running time of these two approximation schemes are not comparable.
For total completion time minimization, we studied the performance of SPT and SPT-ECT and developed an approximation scheme. Our work extends the existing research to analyze scheduling models with multitasking features and develop efficient algorithms, which is an important research direction due to their prevalent applications in business and service industry.

Our research leaves one unsolved case for the total completion time minimization problem: is there an approximation scheme when $m>2$ and more than one machines have arbitrary sharing ratios? For the future work, it is also interesting to study other performance criteria including maximum tardiness, the total number of tardy jobs and other machine environments such as uniform machines, flowshop, etc. Moreover, in our work we assume that all the routine jobs have been predetermined and have fixed release times and duration as well as fixed processing capacity. In some applications, routine jobs may have relaxed time windows to be processed  and flexible processing capacity that could be changed depending on where in the window the routine jobs are processed. For  this scenario, one needs to consider the schedules for both primary jobs and routine jobs simultaneously.

\bibliography{bibliography}
%\end{document}

\end{document}